\newtheorem{lemma}{\textbf{Lemma}}
\newtheorem{proposition}{\textbf{Proposition}}
\newtheorem{remark}{\textbf{Remark}}
\def\bc{{\mathbf{c}}}
\def\bee{{\mathbf{e}}}
\def\bff{{\mathbf{f}}}
\def\bg{{\mathbf{g}}}
\def\bh{{\mathbf{h}}}
\def\bn{{\mathbf{n}}}
\def\bs{{\mathbf{s}}}
\def\bw{{\mathbf{w}}}
\def\bz{{\mathbf{z}}}
\def\bI{{\mathbf{I}}}
\begin{document}

\title{Spatial Intercell Interference Cancellation with CSI Training and Feedback}
\author{Jun Zhang, Jeffrey G. Andrews, and Khaled B. Letaief
\thanks{J. Zhang and K. B. Letaief are with Dept. of ECE, The Hong Kong University of Science and Technology (email: \{eejzhang,eekhaled\}@ust.hk). J. G. Andrews is with Dept. of ECE, The University of Texas at Austin (email: jandrews@ece.utexas.edu).}
}

\maketitle

\begin{abstract}
We investigate intercell interference cancellation (ICIC) with a practical downlink training and uplink channel state information (CSI) feedback model.  The average downlink throughput for such a 2-cell network is derived.  The user location has a strong effect on the signal-to-interference ratio (SIR) and the channel estimation error. This motivates adaptively switching between traditional (single-cell) beamforming and ICIC at low signal-to-noise ratio (SNR) where ICIC is preferred only with low SIR and accurate channel estimation, and the use of ICIC with optimized training and feedback at high SNR. For a given channel coherence time and fixed training and feedback overheads, we develop optimal data vs. pilot power allocation for CSI training as well as optimal feedback resource allocation to feed back CSI of different channels.  Both analog and finite-rate digital feedback are considered. With analog feedback, the training power optimization provides a more significant performance gain than feedback optimization; while conversely for digital feedback, performance is more sensitive to the feedback bit allocation than the training power optimization.  We show that even with low-rate feedback and standard training, ICIC can transform an interference-limited cellular network into a noise-limited one.
\end{abstract}


\section{Introduction}
\emph{Multicell processing}, also called \emph{base station coordination} or \emph{Coordinated Multi-Point (CoMP) transmission/reception}, is an efficient technique to mitigate intercell interference in multicell networks, and it has been attracting significant attention not only from academia \cite{GesHan10JSAC} but also from industry \cite{IrmDro11}. By coordinating the transmission and reception of multiple base stations (BSs), this technique can in principle eliminate intercell interference and transform cellular networks from the familiar interference-limited state to a noise-limited one.

While theoretically appealing, thus far the demonstrated throughput gain of multicell processing in realistic simulations and field trials has been largely disappointing, with typical gains in the 5-15\% range for an LTE-type system \cite{AnnBar10,IrmDro11}.  The large losses versus theory are primarily attributed to the cost of overhead in the form of channel state information (CSI) acquisition and feedback and the difficulty in maintaining the benefits of CoMP in the face of mobility. The need to support at least moderate mobility makes it impossible to amortize these overhead costs over sufficient time periods, as well as causing a fundamental limitation due to the delays inherent in exchanging the various overhead information, which can easily approach a large fraction of the channel coherence time. This motivates directly considering the necessary overhead sources when analyzing multicell processing, optimizing and designing the system for it from the beginning, and thus hopefully achieving higher real-world gains when those overhead losses inevitably occur.

Specifically, we consider a particular multicell processing technique for downlink coordination we term \emph{intercell interference cancellation (ICIC)}. Opposed to full base station cooperation, ICIC is a type of \emph{coordinated single-cell transmission}, as the signal for each user is transmitted from a single BS while the neighboring interfering multi-antenna BSs cancel interference for this user. This means it has low demands on the backhaul capacity, which significantly reduces latency as no inter-BS data sharing is needed and also makes the system robust to synchronization error among BSs. In our previous work \cite{ZhaAnd10JSAC}, we have shown that ICIC can efficiently combat intercell interference and provide performance close to that of multicell processing with full inter-BS data sharing. The goal now is to investigate the performance of ICIC \emph{with a practical CSI model}, where the transmit CSI is obtained through downlink training and uplink feedback. We provide a unified framework to analyze the ICIC system with different CSI assumptions, upon which the training and feedback phases can be optimized.

\subsection{Related Work}

Initial theoretical studies on multi-cell processing assumed full data sharing between BSs and global CSI, which enables global coordination and leads to upper bounds \cite{ShamaiVTC01,Zhang04,Karakayali06a}. To reduce the requirement of backhaul capacity and CSI, clustered coordination schemes have been proposed, including intercell scheduling \cite{ChoAnd08Twc} or more sophisticated interference cancellation approaches \cite{ZhaChe09Twc}. The impact of limited backhaul capacity was investigated in \cite{MarFet07EW,SanSom09IT}. In \cite{BjoZak09Glob, ZakGes10Twc}, distributed BS coordination strategies were proposed based on the virtual signal-to-interference-and-noise (SINR) framework.

The presumed need for a large amount of CSI is a major obstacle for multicell processing. In \cite{RamCai09PIMRC,RamCai09Asilomar}, it was shown that when the CSI overhead is actually taken into account, conventional cellular architectures with no BS coordination are still quite attractive relative to the supposedly superior multicell processing approaches just mentioned. The impact of CSI estimation error in multicell processing was analyzed in \cite{HuhTul10CISS}, while \cite{BhaRao10ICASSP,BhaHea11Tsp} proposed limited feedback techniques for coordinating BSs. These results demonstrated the importance of considering CSI overhead/accuracy, but an accurate characterization of multicell processing systems with both CSI training and feedback, and the corresponding performance optimization, are not yet available.

A related body of research exists for single-cell multiuser MIMO (multiple-input multiple-output) systems, since multicell processing can be regarded as a distributed multiuser MIMO system. Downlink multiuser MIMO requires transmit CSI, and extensive studies have been done on CSI acquisition and optimization for multiuser MIMO. Limited feedback, which provides quantized CSI to the BS and has been successfully implemented in single-user MIMO, has recently received lots of attention in the context of multiuser MIMO \cite{LovHea08JSAC,Jin06IT}. Limited feedback combined with opportunistic user scheduling was investigated in \cite{ShaHas05IT,YooJin07JSAC}. In \cite{ZhaRob09EURASIP,ZhaKou11Tcomm}, the multiuser MIMO system with both delayed and quantized CSI was studied, which shows that single-user MIMO may in fact be preferred over multiuser MIMO with imperfect CSI. In \cite{CaiJin10IT} a comprehensive study of the MIMO broadcast channel with zero-forcing (ZF) precoding considered downlink training and explicit channel feedback, with such overhead optimized in \cite{KobJin09}. In \cite{SanHon10IT}, a similar optimization problem was studied for a beamforming system with limited feedback. Though ICIC bears similarities with multiuser MIMO, the multicell aspect makes CSI training and feedback more decentralized and hence challenging.

One important property of ICIC is that the CSI possessed by the home and neighboring interfering BSs are related to the information signal and interfering signals, respectively. This yields a few important consequences and opportunities. First, the channels from the home and neighboring BSs are statistically independent, which makes the performance easier to analyze. Second, these channels have different path losses, which means the user location will strongly affect the SIR and the quality of channel estimation. Third, the different roles of CSI at the home (related to the information signal) and neighboring BSs (related to interfering signals) motivate adaptively allocating the uplink resource to feed back different channels. Previous works on training and feedback optimization for beamforming and multiuser MIMO in \cite{SanHon10IT,KobJin09} focused on the overhead optimization, which cannot be easily implemented in current systems without significantly altering the standards. In this paper, we consider pilot vs. data power allocation in the downlink training and feedback resource allocation (i.e., feedback symbol power allocation for analogy feedback and feedback bit allocation for digital feedback) in the uplink, which is more amenable to implementation.

\subsection{Contributions}
In this paper, we investigate the performance of ICIC in a 2-cell network, explicitly considering the overheads of CSI training and feedback. The main contributions are as follows.

\textbf{Throughput analysis.} We provide a unified framework to evaluate the average achievable throughput with single-cell beamforming or ICIC at each BS. It shows that the user location has a significant effect, since it affects SIR and the quality of channel estimation. As canceling interference for neighboring users will reduce the signal power for the home user and in addition imperfect CSI degrades the performance, ICIC is not always preferred especially at low SNR. Adaptive strategies are required to improve the system throughput: at low SNR, adaptively switching between single-cell beamforming and ICIC is needed, and ICIC is preferred only when the CSI estimation is accurate and SIR is low; at high SNR, training and feedback can be optimized based on the user locations, fading block length, and the average receive SNR.

\textbf{Training and feedback optimization.} For training optimization, we consider pilot vs. data power allocation, with a tradeoff between the channel estimation accuracy and the downlink throughput. For analog feedback, we optimize the power allocated to the feedback for different BS channels, while for digital feedback, we optimize the number of feedback bits allocated to different BS channels, both of which achieve a balance between the CSI accuracy of the information and interference signals.  The training optimization, since it is broadcast, is common to all users while the feedback optimization is performed individually by each user.
\begin{itemize}
\item \textbf{Training Optimization:} We first provide a sufficient condition under which there is no benefit to optimize the training overhead and simply optimizing the pilot vs. data power tradeoff is adequate. The optimal training power tradeoff is then found, which shows that with block length $T$, and $N_B$ BSs each with $N_t$ antennas, the power allocated to each pilot symbol is proportional to $\sqrt{T/N_t}/N_B$.
\item \textbf{Feedback Optimization:} For periodic analog feedback, it is shown that downlink power tradeoff optimization provides a more significant performance gain than uplink feedback power optimization.  Alternatively, for finite-rate digital feedback, the uplink feedback bit allocation is more important than training optimization.
\end{itemize}
The performance gain of training and feedback optimization is demonstrated by simulation, which shows that ICIC with CSI training and feedback provides significant average and edge throughput gains over conventional single-cell beamforming. Specifically, with proportional uplink/downlink transmit power (for analog feedback) or with feedback bits proportional to the uplink channel capacity (for digital feedback), ICIC with standard training and feedback provides performance approaching that with perfect CSI, and successfully transforms an interference-limited cellular network into a noise-limited one, even accounting for the necessary overhead.


\section{System Model}\label{Sec:Model}

\subsection{Signal Model}
We consider a multicell network with $N_B$ BSs each with $N_t$ antennas, and there is one active single-antenna user in each cell. Universal frequency reuse is assumed. The objective is to design an efficient multicell processing strategy to suppress intercell interference. To retain tractability, we will focus on the 2-cell network as shown in Fig. \ref{fig:2cell}. The BS and user in the $i$-th cell are indexed by $i$, while the BS and user in the other cell are indexed by $\bar{i}=\mbox{mod}(i,2)+1$ for $i=1,2$.

We focus on the downlink transmission. For the data symbol transmission, the discrete baseband signal received at the $i$-th user ($i=1,2$) is given as
\begin{equation}
y_i=\sqrt{P_dL_{i,i}}\mathbf{h}_{i,i}^*\mathbf{f}_ix_i+\sqrt{P_dL_{i,\bar{i}}}\mathbf{h}_{i,\bar{i}}^*\mathbf{f}_{\bar{i}}x_{\bar{i}}+z_i,
\end{equation}
where $\mathbf{a}^*$ is the conjugate transpose of a vector $\mathbf{a}$ and
\begin{itemize}
\item $x_i$ is the transmit signal from the $i$-th BS for the $i$-th user, with the power constraint $\mathbb{E}[|x_i|^2]=1$.
\item $z_i$ is the complex white Gaussian noise with zero mean and unit variance, i.e., $z_i\sim\mathcal{CN}(0,1)$.
\item $P_d$ is the transmit power for data symbols and $L_{i,j}$ is the pathloss given by $L_{i,j}=\eta\left(D_0/d_{i,j}\right)^\alpha$, where $D_0$ is the reference distance, $\eta$ is a unitless constant that depends on the antenna characteristics, and $d_{i,j}$ is the distance between user $i$ and BS $j$. In the following, we set $D_0=R$, so $\eta P_d$ is the average received SNR at the cell edge.
\item $\mathbf{h}_{i,j}$ is the $N_t\times{1}$ channel vector from the BS $j$ to user $i$, where each component is i.i.d. $\mathcal{CN}(0,1)$. We consider a block fading model, where the channel is constant over each block of length $T$ and is independent for different blocks. 
\end{itemize}

The vector $\mathbf{f}_{i}$ is the precoding vector at BS $i$, $i=1,2$. It is normalized and is designed based on the available CSI at the BS. Although the main focus is on ICIC (with the precoder denoted as $\bff_{i,\mathrm{IC}}$), we also consider conventional single-cell beamforming (with the precoder denoted as $\bff_{i,\mathrm{BF}}$):
\begin{itemize}
\item \textbf{Single-cell Eigen-beamforming:} The precoding vector is the channel direction, i.e., for the $i$-th user $\mathbf{f}_{i,\mathrm{BF}}=\mathbf{h}_{i,i}/\|\mathbf{h}_{i,i}\|$. Therefore, the signal term is distributed as $|\mathbf{f}_{i,\mathrm{BF}}^*\mathbf{h}_{i,i}|^2\sim\chi^2_{2N_t}$, where $\chi^2_n$ denotes the chi-square random variable with $n$ degrees of freedom. When applying eigen-beamforming in multicell networks, each user will suffer interference from other cells.

\item \textbf{Intercell Interference Cancellation (ICIC):} Taking cell 1 as an example, to cancel its interference for users in cell 2, 3, $\cdots$, $N_B$, ($N_B\leq{N_t}$), and also to maximize the desired signal power $|\mathbf{f}_{1,\mathrm{IC}}^*\mathbf{h}_{1,1}|^2$, the precoding vector $\mathbf{f}_{1,\mathrm{IC}}$ is chosen in the direction of the projection of vector $\mathbf{h}_{1,1}$ on the nullspace of vectors $\hat{\mathbf{H}}=[\mathbf{h}_{2,1},\mathbf{h}_{3,1},\cdots,\mathbf{h}_{N_B,1}]$ \cite{JinAnd10Tcomm}, i.e., the precoding vector is the normalized version of the vector $\mathbf{w}_1^{(1)}=\left(\mathbf{I}-\hat{\mathbf{H}}\left(\hat{\mathbf{H}}^*\hat{\mathbf{H}}\right)^{-1}\hat{\mathbf{H}}^*\right)\mathbf{h}_{1,1}$.
    From \cite{JinAnd10Tcomm}, we have the distribution of the signal power as $|\mathbf{f}_{1,\mathrm{IC}}^*\mathbf{h}_{1,1}|^2\sim\chi^2_{2(N_t-(K-1))}$. Note that the precoder design only requires local CSI, i.e., BS $k$ only needs $\bh_{i,j}$ with $j=k$, and $i=1,2$.
\end{itemize}

Assuming user $j$ selects the transmission strategy $s_j\in\{\mathrm{IC},\mathrm{BF}\}$, $j=1,2$, the receive signal-to-interference-plus-noise ratio (SINR) for user $i$ is
\begin{equation}\label{eq:SINR}
\mbox{SINR}_i(s_1,s_2)=\frac{P_dL_{i,i}|\mathbf{h}_{i,i}^*\mathbf{f}_{i,s_i}|^2}{1+P_dL_{i,\bar{i}}|\mathbf{h}_{i,\bar{i}}^*\mathbf{f}_{\bar{i},s_{\bar{i}}}|^2}.
\end{equation}
The SINR not only depends on the transmission strategy pair, but also depends on the available CSI, which affects the precoder design.

\subsection{The CSI Model}
To enable transmit precoding, downlink CSI is required at the BS. In TDD (Time Division Duplex) systems with channel reciprocity, transmit CSI can be obtained through uplink training. For systems without channel reciprocity, such as FDD (Frequency Division Duplex) systems, downlink training and uplink feedback are normally applied to provide transmit CSI, which is our focus in the paper. Each fading block of length $T$ is divided into three phases: a downlink training phase of $T_t$ channel uses, an uplink feedback phase of $T_{fb}$ channel uses, and the data transmission phase of $T_d$ channel uses. CSI training is discussed in this subsection, while feedback will be treated in Section \ref{Sec:TrainingFB}.

The CSI at each user is obtained through downlink training with dedicated pilot symbols. We consider orthogonal training, where the training phase spans $T_t$ ($T_t\geq N_BN_t$) channel uses, using orthogonal training sequences $\{\phi_0,\ldots,\phi_{N_BN_t-1}\}$, with $\phi_i\in\mathbb{C}^{T_t\times1}$. The set of training sequences is partitioned into $N_B$ disjoint groups each with $N_t$ sequences, denoted as $\Phi_i$ for the $i$th BS, $i=1,2,\ldots,N_B$. The power scaling factor is $\sqrt{\frac{T_t}{N_t}P_t}$ so the transmit power for the pilot symbols from each BS is $T_tP_t$, which sets the power constraint for each pilot symbol to be $P_t$. For simplicity, we normalize $T$ and $T_t$ as $\overline{T}\triangleq\frac{T}{N_t}$, and $\overline{T}_t\triangleq\frac{T_t}{N_t}$.

Different from conventional single-cell processing systems, we assume that each user estimates CSI from both its home BS and the neighboring BS. The user $i$ estimates the channel from BS $j$ based on the observation
\begin{equation}\label{eq:s_k}
\bs_{i,j}=\sqrt{\overline{T}_tP_tL_{i,j}}\bh_{i,j}+\bz_i,\quad i,j=1,2,
\end{equation}
corresponding to the common training channel output, where $\bz_i\sim\mathcal{CN}(0,\bI_{N_t})$. If $i=j$, $\bs_{i,j}$ corresponds to the pilot from the home BS; otherwise, it corresponds to the pilot from the neighboring BS. While the channel estimation quality of the home BS determines the information signal power, the channel estimation of neighboring cells determines the residual intercell interference. Therefore, the training design for multicell processing is quite different from conventional single-cell processing systems.

The minimum mean square error (MMSE) estimate of $\bh_{i,j}$ given the observation $\bs_{i,j}$ is \cite{Kay93I}
\begin{equation}\label{eq:CHest}
\tilde{\bh}_{i,j}=\mathbb{E}[\bh_{i,j}\bs_{i,j}^H]\mathbb{E}[\bs_{i,j}\bs_{i,j}^H]^{-1}\bs_{i,j}
=\frac{\sqrt{\overline{T}_tP_tL_{i,j}}}{1+\overline{T}_tP_tL_{i,j}}\bs_{i,j},\quad i,j=1,2.
\end{equation}
The channel $\bh_{i,j}$ can be written in terms of the estimate $\tilde{\bh}_{i,j}$ and the estimation noise $\bn_{i,j}$ as
\begin{equation}\label{eq:H-Hest}
\bh_{i,j}=\tilde{\bh}_{i,j}+\bn_{i,j},\quad i,j=1,2.
\end{equation}
With the MMSE estimator \cite{Kay93I}, $\bn_{i,j}$ is independent of the estimate and is zero-mean Gaussian with covariance $\sigma_{i,j}^2\bI_{N_t}$ with $\sigma_{i,j}^2=\frac{1}{1+\overline{T}_tP_tL_{i,j}}$, while $\tilde{\bh}_{i,j}$ is with covariance $\kappa^2_{i,j}\bI_{N_t}$, $\kappa_{i,i}^2=1-\sigma_{i,j}^2$.

Considering the two downlink transmit phases, i.e., the training phase and the data transmission phase, we have the following constraints:
\begin{align}
\overline{T}_t+\overline{T}_d=\overline{T}-\overline{T}_{fb},\quad
P_t\overline{T}_t+P_d\overline{T}_d=P^{dl}(\overline{T}-\overline{T}_{fb}),
\end{align}
where $\overline{T}_d\triangleq\frac{T_d}{N_t}$, $\overline{T}_{fb}\triangleq\frac{T_{fb}}{N_t}$, and $P^{dl}$ is the average power constraint in the downlink. The training optimization with these constraints will be investigated in Section \ref{Sec:Training}.

With the receive SINR given in \eqref{eq:SINR}, treating intercell interference as additive white Gaussian noise, we are interested in the following average achievable throughput
\begin{equation}\label{eq:Rate}
R_i(s_1,s_2)=\mathbb{E}\left[\log_2\left(1+\mbox{SINR}_i\right)\right],\quad i=1,2.
\end{equation}

\begin{remark}
To achieve this throughput, an additional dedicated training round is needed after the uplink feedback phase for each user to get its received SINR value \cite{CaiJin10IT}. For convenience, we ignore this dedicated training round and assume there is a genie who provides each user the actual SINR, and this \emph{genie-aided} throughput will be used as the performance metric throughout the paper. As the capacity of this kind of interference channel is unknown even with perfect CSI, our focus is on the achievable throughput with specific training and feedback methods.
\end{remark}

Considering the overhead due to training and feedback, the effective achievable throughput is
\begin{equation}
\overline{R}_i(s_1,s_2)=\left(1-\frac{\overline{T}_t+\overline{T}_{fb}}{\overline{T}}\right)R_i(s_1,s_2), \quad i= 1,2.
\end{equation}
We will use $R_i$, $R_{i,\mathrm{T}}$, $R_{i,\mathrm{aFB}}$, and $R_{i,\mathrm{dFB}}$ to denote the throughput of user $i$ with perfect CSI, downlink training, training and analog feedback, and training and digital feedback, respectively.

\subsection{Auxiliary Results}
This subsection provides useful results that are used for throughput analysis throughout the paper. The average achievable throughput in \eqref{eq:Rate} can be in general expressed as $\mathbb{E}[\log_2(1+X)]$ with $X\triangleq\frac{Z}{1+Y}$, where $Z$ and $Y$ are the signal power and the interference power, respectively. With independent Rayleigh fading channels, the following results from \cite{ZhaAnd10JSAC} can be used for throughput analysis for different systems considered in the paper.

\begin{lemma}\label{lemma:Rate}
Consider the random variable $X\triangleq\frac{\alpha Z}{1+Y}$, where $\alpha>0$, and $Z$ and $Y$ are independent.
\begin{enumerate}
\item If $Z\sim\chi_{2M}^2$ and $Y=0$, then
    \begin{align}\label{eq:Rate_BF}
    \mathcal{R}^{(1)}(\alpha,M)\triangleq\mathbb{E}_X\left[\log_2\left(1+X\right)\right]
    =\log_2(e)e^{1/\alpha}\sum_{k=0}^{M-1}\frac{\Gamma(-k,1/\alpha)}{\alpha^k}.
    \end{align}
\item If $Z\sim\chi^2_{2M}$, $Y\sim\beta\cdot\chi^2_2$, then
\begin{align}\label{eq:Rate_I2}
\mathcal{R}^{(2)}(\alpha,\beta,M)\triangleq\mathbb{E}_X\left[\log_2(1+X)\right]=\log_2(e)\sum_{i=0}^{M-1}\sum_{l=0}^i\frac{\alpha^{l+1-i}}{\beta(i-l)!}\cdot{I_1}\left(\frac{1}{\alpha},\frac{\alpha}{\beta},i,l+1\right),
\end{align}
where $I_1(a,b,m,n)=\int_0^\infty\frac{x^me^{-ax}}{(x+b)^n(x+1)}\mbox{d}x$, with a closed-form expression given in \cite{ZhaAnd10JSAC}.
\item If $Z\sim\chi^2_{2M}$, $Y=\beta_1{Y_1}+\beta_2{Y_2}$ with $Y_1\sim\chi^2_{2}$, $Y_2\sim\chi^2_{2}$, and they are mutually independent,
    \begin{align}\label{eq:Rate_3cell}
&\mathcal{R}^{(3)}(\alpha,\beta_1,\beta_2,M)\triangleq\mathbb{E}_X\left[\log_2(1+X)\right]\notag\\
=&\log_2(e)\sum_{i=0}^{M-1}\sum_{l=0}^i\frac{\alpha^{l-i+1}}{(\beta_1-\beta_2)(i-l)!}\left[I_1\left(\frac{1}{\alpha},\frac{\alpha}{\beta_1},i,l+1\right)-I_1\left(\frac{1}{\alpha},\frac{\alpha}{\beta_2},i,l+1\right)\right].
\end{align}
\end{enumerate}
\end{lemma}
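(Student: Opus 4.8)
The plan is to reduce all three parts to a single master integral via the tail-integral (complementary CDF) representation of the logarithm, and then invoke a standard incomplete-gamma formula. Throughout I use the paper's convention that $Z\sim\chi^2_{2M}$ is a unit-scale Gamma variable with density $z^{M-1}e^{-z}/(M-1)!$ and tail $\mathbb{P}(Z>z)=e^{-z}\sum_{k=0}^{M-1}z^k/k!$, while $\chi^2_2$ is a unit-mean exponential. The unifying tool is the identity
\[
\mathbb{E}[\ln(1+X)]=\int_0^\infty\frac{\mathbb{P}(X>x)}{1+x}\,\mathrm{d}x,
\]
obtained by writing $\ln(1+X)=\int_0^X(1+t)^{-1}\mathrm{d}t$ and swapping the order of integration. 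The factor $1/(1+x)$ it produces is precisely the source of the $(x+1)$ term in the definition of $I_1$.

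For part (1), since $Y=0$ we have $X=\alpha Z$, so $\mathbb{P}(X>x)=e^{-x/\alpha}\sum_{k=0}^{M-1}(x/\alpha)^k/k!$. Substituting into the tail integral and interchanging sum and integral, the result is a finite sum of the elementary integrals $\int_0^\infty x^k e^{-x/\alpha}(1+x)^{-1}\mathrm{d}x$. Each equals $k!\,e^{1/\alpha}\Gamma(-k,1/\alpha)$ by the standard formula $\int_0^\infty x^{\nu-1}e^{-\mu x}(x+1)^{-1}\mathrm{d}x=e^{\mu}\Gamma(\nu)\Gamma(1-\nu,\mu)$ with $\nu=k+1$, $\mu=1/\alpha$; collecting terms and multiplying by $\log_2 e$ gives \eqref{eq:Rate_BF}. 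The case $M=1$ reduces to $e^{1/\alpha}\Gamma(0,1/\alpha)$, a useful sanity check.

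For part (2), I would integrate out the interference first. Writing $\mathbb{P}(X>x)=\mathbb{E}_Y[\mathbb{P}(Z>x(1+Y)/\alpha)]$ against the exponential density $\beta^{-1}e^{-y/\beta}$, the only $Y$-integral needed is $\int_0^\infty(1+y)^k e^{-\mu y}\mathrm{d}y$ with $\mu=1/\beta+x/\alpha$, which collapses to the finite sum $k!\,e^{-\mu}\mu^{-(k+1)}\sum_{l=0}^k\mu^l/l!$. A clean cancellation of exponentials, $e^{1/\beta}e^{-\mu}=e^{-x/\alpha}$, then leaves $\mathbb{P}(X>x)$ as a finite double sum whose $x$-dependence is exactly $x^k e^{-x/\alpha}(x+\alpha/\beta)^{l-k-1}$. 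Dividing by $1+x$ and integrating identifies each term as $I_1(1/\alpha,\alpha/\beta,k,k+1-l)$; re-indexing the inner sum via $l\mapsto i-l$ matches \eqref{eq:Rate_I2} coefficient by coefficient. Part (3) is structurally identical: the only change is that $Y=\beta_1Y_1+\beta_2Y_2$ is hypoexponential with density $(\beta_1-\beta_2)^{-1}(e^{-y/\beta_1}-e^{-y/\beta_2})$, so the part-(2) computation runs twice, once with $\beta_1$ and once with $\beta_2$, and their difference divided by $\beta_1-\beta_2$ yields \eqref{eq:Rate_3cell}.

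The routine but delicate part is the bookkeeping: tracking the two finite sums, reducing each $\Gamma(k+1,\mu)$ to its polynomial form, and verifying the exponential cancellations that make the final integrand rational times $e^{-x/\alpha}$. The one genuinely external input is the closed form of $I_1(a,b,m,n)=\int_0^\infty x^m e^{-ax}(x+b)^{-n}(x+1)^{-1}\mathrm{d}x$, obtained by partial fractions in $(x+b)$ and $(x+1)$ that reduce it to exponential-integral primitives; I would quote this from \cite{ZhaAnd10JSAC} rather than re-derive it. The main things to get right are requiring $\beta_1\neq\beta_2$ so the hypoexponential density is valid, and confirming that the substitution $l\mapsto i-l$ in part (2) reproduces the stated coefficient $\alpha^{l+1-i}/[\beta(i-l)!]$ exactly.
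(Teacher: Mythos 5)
Your derivation is correct, but note that the paper itself offers no proof of this lemma: it is imported verbatim from \cite{ZhaAnd10JSAC}, so there is nothing in this manuscript to compare against line by line. Your tail-integral route --- writing $\mathbb{E}[\ln(1+X)]=\int_0^\infty (1+x)^{-1}\mathbb{P}(X>x)\,\mathrm{d}x$, conditioning on the interference, and reducing everything to $I_1$ --- is a clean, self-contained reconstruction. I checked the bookkeeping: in part (2) the inner integral yields $\mathbb{P}(X>x)=\sum_{k}\sum_{l=0}^{k}\frac{\alpha^{1-l}}{\beta\, l!}\,x^{k}e^{-x/\alpha}(x+\alpha/\beta)^{-(k+1-l)}$, and the substitution $l\mapsto i-l$ does reproduce the stated coefficient $\alpha^{l+1-i}/[\beta(i-l)!]$ with fourth argument $l+1$; part (3) follows by linearity from the hypoexponential density $(\beta_1-\beta_2)^{-1}(e^{-y/\beta_1}-e^{-y/\beta_2})$, with $\beta_1\neq\beta_2$ required exactly as you flag. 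One small slip: you quote the $Y$-integral $\int_0^\infty(1+y)^k e^{-\mu y}\,\mathrm{d}y$ as $k!\,e^{-\mu}\mu^{-(k+1)}\sum_{l=0}^k \mu^l/l!$, which is actually the value of $\int_1^\infty u^k e^{-\mu u}\,\mathrm{d}u$; the former carries an extra factor $e^{\mu}$. Your subsequent cancellation $e^{1/\beta}e^{-\mu}=e^{-x/\alpha}$ shows you are tracking the correct net exponential, so the final answer is unaffected, but the intermediate line as written is off by $e^{\mu}$. The only external inputs are the standard identity $\int_0^\infty x^{\nu-1}e^{-\mu x}(x+1)^{-1}\mathrm{d}x=e^{\mu}\Gamma(\nu)\Gamma(1-\nu,\mu)$ and the closed form of $I_1$, both of which are reasonable to quote.
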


\section{ICIC with CSI Training ($T_t>0, T_{fb}=0$)}\label{Sec:Training}
In this section, we consider the 2-cell network with downlink CSI training and assume that the BS has direct access to the CSI estimated at the user. The results developed following this assumption can be applied to the TDD system where the transmit CSI can be obtained through uplink training, and they will also be applied to the system with both training and feedback in Section \ref{Sec:TrainingFB}.

\subsection{Throughput Analysis and Adaptive Transmission}
One unique property of wireless networks is the spatial distribution of different nodes, which causes multiple order of magnitude fluctuation in the information signal power and the interference level. Three typical scenarios of user locations in the 2-cell network are shown in Fig. \ref{fig:2cell}. As shown in \cite{ZhaAnd10JSAC}, concerning the sum throughput, it is not always optimal to apply ICIC, especially for low to medium receive SNRs (e.g., for both users in scenario (b), or user 2 in scenario (c) in Fig. \ref{fig:2cell}), as cancelling interference for the neighboring user will reduce the signal power for the home user. Therefore, it is necessary to switch between single-cell beamforming and ICIC based on user locations. When considering CSI training, as the pilot symbols received from the home BS and the neighboring BS come from different propagation paths, the user location now will also affect the accuracy of the estimation for different channels.

For adaptive transmission strategy selection, we first derive the average achievable throughput, given in \emph{Proposition \ref{Prop:Training}}.

\begin{proposition}\label{Prop:Training}
The average achievable throughput of user $i$ ($i=1,2$) with estimated CSI is approximated by
\begin{equation}\label{eq:Rate_CHest}
R_{i,\mathrm{T}}(s_1,s_2)\approx\left\{\begin{array}{ll}
\mathcal{R}^{(2)}(\kappa_{i,i}^2P_dL_{i,i},P_dL_{i,\bar{i}},N_t) & (s_i,s_{\bar{i}})=(\mathrm{BF},\mathrm{BF})\\
\mathcal{R}^{(2)}(\kappa_{i,i}^2P_dL_{i,i},\sigma_{i,\bar{i}}^2P_dL_{i,\bar{i}},N_t) & (s_i,s_{\bar{i}})=(\mathrm{BF},\mathrm{IC})\\
\mathcal{R}^{(2)}(\kappa_{i,i}^2P_dL_{i,i},\sigma_{i,\bar{i}}^2P_dL_{i,\bar{i}},N_t-1) & (s_i,s_{\bar{i}})=(\mathrm{IC},\mathrm{IC})\\
\mathcal{R}^{(2)}(\kappa_{i,i}^2P_dL_{i,i},P_dL_{i,\bar{i}},N_t-1) & (s_i,s_{\bar{i}})=(\mathrm{IC},\mathrm{BF})
\end{array}\right.
\end{equation}
where $\mathcal{R}^{(2)}$ is given in \eqref{eq:Rate_I2}.
\end{proposition}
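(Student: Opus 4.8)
The plan is to reduce each of the four cases to the canonical form $X=\alpha Z/(1+Y)$ of Lemma~\ref{lemma:Rate}(2) by separately characterizing the effective signal power $Z$ and the residual interference power $Y$, then reading off the triple $(\alpha,\beta,M)$. Throughout I would exploit the MMSE decomposition $\bh_{i,j}=\tilde{\bh}_{i,j}+\bn_{i,j}$ from \eqref{eq:H-Hest}, in which $\tilde{\bh}_{i,j}$ and $\bn_{i,j}$ are independent with $\tilde{\bh}_{i,j}\sim\mathcal{CN}(0,\kappa_{i,j}^2\bI_{N_t})$ and $\bn_{i,j}\sim\mathcal{CN}(0,\sigma_{i,j}^2\bI_{N_t})$, and the crucial structural fact that every precoder $\bff_{i,s_i},\bff_{\bar{i},s_{\bar{i}}}$ is a function of the channel \emph{estimates} only, hence independent of the corresponding estimation-error vectors.

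First I would treat the signal term $P_dL_{i,i}|\bh_{i,i}^*\bff_{i,s_i}|^2$. Writing $\bh_{i,i}^*\bff_{i,s_i}=\tilde{\bh}_{i,i}^*\bff_{i,s_i}+\bn_{i,i}^*\bff_{i,s_i}$, the coherent term $\tilde{\bh}_{i,i}^*\bff_{i,s_i}$ is, under beamforming, simply $\|\tilde{\bh}_{i,i}\|$, whose square is $\kappa_{i,i}^2\chi^2_{2N_t}$; under ICIC it is the norm of the projection of $\tilde{\bh}_{i,i}$ onto the $(N_t-1)$-dimensional estimated nullspace, whose square is $\kappa_{i,i}^2\chi^2_{2(N_t-1)}$, mirroring the perfect-CSI distributions quoted in Section~\ref{Sec:Model}. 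The approximation enters precisely here: I would discard the noise-leakage term $\bn_{i,i}^*\bff_{i,s_i}$, a scalar $\mathcal{CN}(0,\sigma_{i,i}^2)$ that is negligible against the coherent part when the home-link estimate is reasonably accurate. This yields $Z\sim\chi^2_{2M}$ with $M=N_t$ when $s_i=\mathrm{BF}$ and $M=N_t-1$ when $s_i=\mathrm{IC}$, and absorbs $\kappa_{i,i}^2P_dL_{i,i}$ into $\alpha$, matching the first argument of $\mathcal{R}^{(2)}$ in all four rows.

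Next I would compute the interference term $P_dL_{i,\bar{i}}|\bh_{i,\bar{i}}^*\bff_{\bar{i},s_{\bar{i}}}|^2$, the source of the two distinct values of $\beta$. If the neighbor beamforms, $\bff_{\bar{i},\mathrm{BF}}$ depends only on $\tilde{\bh}_{\bar{i},\bar{i}}$ and is thus a unit vector independent of $\bh_{i,\bar{i}}$; projecting the isotropic $\mathcal{CN}(0,\bI_{N_t})$ vector $\bh_{i,\bar{i}}$ onto a fixed direction gives $|\bh_{i,\bar{i}}^*\bff_{\bar{i},\mathrm{BF}}|^2\sim\chi^2_2$, so $\beta=P_dL_{i,\bar{i}}$. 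If instead the neighbor runs ICIC, $\bff_{\bar{i},\mathrm{IC}}$ is built to lie in the nullspace of the \emph{estimate} $\tilde{\bh}_{i,\bar{i}}$, so $\tilde{\bh}_{i,\bar{i}}^*\bff_{\bar{i},\mathrm{IC}}=0$ exactly and only the error leaks: $\bh_{i,\bar{i}}^*\bff_{\bar{i},\mathrm{IC}}=\bn_{i,\bar{i}}^*\bff_{\bar{i},\mathrm{IC}}$. Since $\bn_{i,\bar{i}}$ is independent of both estimates defining $\bff_{\bar{i},\mathrm{IC}}$, conditioning on the precoder yields a $\mathcal{CN}(0,\sigma_{i,\bar{i}}^2)$ scalar, whence $|\bh_{i,\bar{i}}^*\bff_{\bar{i},\mathrm{IC}}|^2\sim\sigma_{i,\bar{i}}^2\chi^2_2$ and $\beta=\sigma_{i,\bar{i}}^2P_dL_{i,\bar{i}}$. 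This interference step is exact, not approximate.

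Finally I would verify the independence hypothesis of Lemma~\ref{lemma:Rate}(2): $Z$ is a functional of the home channel $\bh_{i,i}$ and its estimate, while $Y$ is a functional of the cross channel $\bh_{i,\bar{i}}$ and the neighbor's estimates, and these two channel families are statistically independent under the i.i.d. block-fading model, so $Z\perp Y$. Substituting the four $(\alpha,\beta,M)$ triples into \eqref{eq:Rate_I2} then reproduces exactly the four rows of \eqref{eq:Rate_CHest}. I expect the signal-power step to be the main obstacle: the exact law of $|\bh_{i,i}^*\bff_{i,s_i}|^2$ is a noncentral chi-square arising from the retained coherent term plus its cross term with the leakage, and the clean $\kappa_{i,i}^2\chi^2_{2M}$ form emerges only after dropping that cross term — so the result is genuinely an approximation, and the real work lies in arguing that the coherent contribution dominates rather than in establishing any new distributional identity.
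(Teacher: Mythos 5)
Your proposal is correct and follows essentially the same route as the paper's proof: decompose $\bh_{i,j}=\tilde{\bh}_{i,j}+\bn_{i,j}$, drop the small error-leakage term in the signal power to get $\kappa_{i,i}^2\chi^2_{2M}$ with $M=N_t$ or $N_t-1$, observe that under neighboring ICIC only the estimation error survives in the interference (giving $\sigma_{i,\bar{i}}^2\chi^2_2$, exactly as the paper's step (d)), and then invoke Lemma~\ref{lemma:Rate}. Your added remarks on verifying $Z\perp Y$ and on the noncentral chi-square nature of the unapproximated signal law go slightly beyond what the paper writes down, but they do not change the argument.
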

\begin{proof}
See Appendix \ref{App:PropTraining}.
\end{proof}

To maximize the sum throughput, the preferred transmission strategy pair is then determined as
\begin{equation}\label{eq:Adaptive}
(s^\star_1,s^\star_2)=\arg\max_{s_1,s_2\in\{\mathrm{BF},\mathrm{IC}\}}R_{1,\mathrm{T}}(s_1,s_2)+R_{2,\mathrm{T}}(s_1,s_2).
\end{equation}

Fig. \ref{fig:User1Edge} shows the simulation and calculation results with perfect CSI and CSI training, where user 1 is at the location $(-0.1R,0)$ in Fig. \ref{fig:2cell}. The results for perfect CSI come from \cite{ZhaAnd10JSAC}. It shows that the approximation \eqref{eq:Rate_CHest} is accurate and the estimated CSI degrades the performance. Specifically, with CSI training, when user 2 is in the interior (near $(R,0)$), there is little or no performance gain for user 1 to perform ICIC for user 2, which is quite different from the perfect CSI case. This is because when user 2 is close to BS 2, the received pilot signal power from BS 1 becomes weak, which reduces the estimation accuracy and results a high level of residual interference from BS 1.

In Fig. \ref{fig:ModePlot}, we plot the preferred transmission strategy pairs for different user locations, with edge SNR 4 dB. The preferred strategy pair is determined by \eqref{eq:Adaptive}. Comparing the system with perfect CSI and with CSI training, the operating region of each strategy pair changes significantly. With CSI training, the strategy $\mathrm{IC}$ is picked at a given BS only when the neighboring user is very close to the cell edge, where the CSI estimation is accurate and the performance gain from ICIC is high.


\subsection{High SNR Performance and Training Optimization}\label{Sec:TrainingOpt}
In this subsection, we investigate the high-SNR performance of ICIC. While adaptive transmission is required at low SNR, ICIC is always preferred at high SNRs due to its capability to suppress intercell interference, as the multicell network is interference-limited at high SNR.

With perfect CSI, intercell interference is completely cancelled, and the average achievable throughput for the $i$-th user ($i=1,2$) at high SNR can be approximated as $R_{i}\approx\mathbb{E}\left[\log_2\left(L_{i,i}P_d\chi_{2(N_t-1)}^2\right)\right]$, where $\chi_{2n}^2$ denotes a chi-square random variable with $2n$ degrees of freedom. As
$\mathbb{E}\left[\log\chi_{2n}^2\right]=\psi(n)$, where $\psi(\cdot)$ is Euler's digamma function that satisfies $\psi(m)=\psi(1)+\sum_{l=1}^{m-1}\frac{1}{l}$ for positive integers $m$ and $\psi(1)\approx-0.577215$, we have
\begin{equation}\label{eq:R_CSI}
R_i\approx\log_2P_dL_{i,i}+\log_2e\cdot\psi(N_t-1)=\log_2\left(P_dL_{i,i}e^{\psi(N_t-1)}\right),
\end{equation}
which depends on the user location and $P_d$.

When CSI is obtained through downlink training, the channel estimation error will degrade the system performance. The following lemma shows the impact of CSI training at high SNR.
\begin{lemma}[Throughput loss due to training]\label{lemma:rateloss}
At high SNR (assuming $P_t,P_d\rightarrow\infty$ with $\frac{P_d}{P_t}=\nu$), the throughput loss of ICIC due to CSI training for user $i$ ($i=1,2$) is
\begin{equation}
\Delta{R}_{i,\mathrm{T}}=R_{i}-R_{i,\mathrm{T}}\approx\mathbb{E}\left[\log\left(1+\frac{\nu}{\overline{T}_t}\chi_2^2\right)\right]=\mathcal{R}^{(1)}\left(\frac{\nu}{\overline{T}_t},1\right),
\end{equation}
where $\mathcal{R}^{(1)}$ is given in \eqref{eq:Rate_BF}.
\end{lemma}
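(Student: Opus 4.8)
The plan is to start from the estimated-CSI throughput for the $(\mathrm{IC},\mathrm{IC})$ pair, which is the strategy always chosen at high SNR. By Proposition~\ref{Prop:Training} and the definition of $\mathcal{R}^{(2)}$ in \eqref{eq:Rate_I2}, the training SINR is
\begin{equation}
\mathrm{SINR}_{i,\mathrm{T}} = \frac{\kappa_{i,i}^2 P_d L_{i,i}\,\chi_{2(N_t-1)}^2}{1 + \sigma_{i,\bar{i}}^2 P_d L_{i,\bar{i}}\,\chi_2^2},
\end{equation}
whereas with perfect CSI the interference is nulled exactly, so $\mathrm{SINR}_i = P_d L_{i,i}\,\chi_{2(N_t-1)}^2$. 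I would then write the loss as a single expectation of a log-ratio, $\Delta R_{i,\mathrm{T}} = \mathbb{E}\big[\log_2\frac{1+\mathrm{SINR}_i}{1+\mathrm{SINR}_{i,\mathrm{T}}}\big]$, arranged so that the common signal factor $P_d L_{i,i}\,\chi_{2(N_t-1)}^2$ cancels in the high-SNR limit.

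Next I would let $P_d,P_t\to\infty$ with $P_d/P_t=\nu$ fixed and track the two estimation-dependent coefficients. The home-channel attenuation disappears, $\kappa_{i,i}^2 = 1-(1+\overline{T}_t P_t L_{i,i})^{-1}\to 1$. The decisive computation is the residual-interference coefficient,
\begin{equation}
\sigma_{i,\bar{i}}^2 P_d L_{i,\bar{i}} = \frac{\nu P_t L_{i,\bar{i}}}{1+\overline{T}_t P_t L_{i,\bar{i}}} \longrightarrow \frac{\nu}{\overline{T}_t},
\end{equation}
in which the path loss $L_{i,\bar{i}}$ cancels, leaving the location-independent residual $\frac{\nu}{\overline{T}_t}\chi_2^2$. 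This is the analytic heart of the statement: imperfect nulling converts otherwise fully suppressible interference into an irreducible term whose strength is fixed only by the power ratio $\nu$ and the training length $\overline{T}_t$.

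With these limits, both SINRs grow without bound---their numerators scale like $P_d$ while the denominator $1+\frac{\nu}{\overline{T}_t}\chi_2^2$ stays bounded in distribution---so each outer $1+$ becomes asymptotically negligible and the log-ratio converges pointwise to $\log_2\!\big((1+\frac{\nu}{\overline{T}_t}\chi_2^2)/\kappa_{i,i}^2\big)$. The signal factor cancels exactly and the $-\log_2\kappa_{i,i}^2$ piece vanishes, leaving
\begin{equation}
\Delta R_{i,\mathrm{T}} \approx \mathbb{E}\left[\log_2\left(1+\frac{\nu}{\overline{T}_t}\chi_2^2\right)\right] = \mathcal{R}^{(1)}\left(\frac{\nu}{\overline{T}_t},1\right),
\end{equation}
the last equality being \eqref{eq:Rate_BF} with $M=1$; splitting the two expectations also identifies the cancelled signal term with the perfect-CSI throughput \eqref{eq:R_CSI}.

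The step I expect to be the main obstacle is justifying the replacement of $\log_2(1+\mathrm{SINR})$ by $\log_2(\mathrm{SINR})$ inside the expectation, i.e.\ interchanging the high-SNR limit with $\mathbb{E}[\cdot]$. Pointwise convergence breaks down on the event where the signal variable $\chi_{2(N_t-1)}^2$ is near zero, where the SINR need not be large and the correction $\log_2(1+1/\mathrm{SINR})$ is sizable. The clean remedy is to bound this nonnegative correction uniformly in $P_d$: for $P_d$ beyond a fixed reference power one has $\sigma_{i,\bar{i}}^2 P_d L_{i,\bar{i}}\le \nu/\overline{T}_t$ and $\kappa_{i,i}^2$ bounded below, so the correction is dominated by a $P_d$-independent term of order $-\log_2\chi_{2(N_t-1)}^2$ near the origin. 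Since the $\chi_{2(N_t-1)}^2$ density behaves like $z^{N_t-2}$ there and $\int_0^\epsilon(-\log_2 z)\,z^{N_t-2}\,\mathrm{d}z<\infty$ for all $N_t\ge 2$, this dominating function is integrable, and dominated convergence forces the neglected correction to $o(1)$. Everything else reduces to the elementary limits above once Proposition~\ref{Prop:Training} is granted.
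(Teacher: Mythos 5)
Your proposal is correct and follows essentially the same route as the paper's proof: start from the $(\mathrm{IC},\mathrm{IC})$ case of Proposition~\ref{Prop:Training}, take the limits $\kappa_{i,i}^2\to1$ and $\sigma_{i,\bar{i}}^2P_dL_{i,\bar{i}}\to\nu/\overline{T}_t$, drop the outer $1+$, and split the logarithm so the signal term reproduces $R_i$. The only difference is that you add a dominated-convergence justification for interchanging the high-SNR limit with the expectation, a step the paper handles informally with approximation signs.
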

\begin{proof}
See Appendix \ref{App:rateloss}.
\end{proof}

The throughput loss due to training is a constant at high SNR and depends on the ratio of $\nu$ and $\overline{T}_t$. It decreases as $\nu$ decreases (more power is allocated to training symbols) and/or the training period $\overline{T}_t$ increases. For example, if $P_d=P_t$ and $\overline{T}_t=N_B$ with $N_B=2$, then $\Delta{R}_{i,\mathrm{T}}\approx R^{(1)}(1/2,1)\approx0.52\mbox{ bps/Hz}$. This rate loss is negligible at high SNR.

Based on the throughput loss and applying Jensen's inequality $\mathcal{R}^{(1)}\left(\frac{\nu}{\overline{T}_t},1\right)\leq\log_2\left(1+\frac{\nu}{\overline{T}_t}\right)$, we can obtain a high-SNR approximation for the achievable throughput with CSI training for user $i$
\begin{align}
R_{i,\mathrm{T}}\approx R_i-\log_2\left(1+\frac{P_d}{P_t}\frac{1}{\overline{T}_t}\right)
=\log_2\frac{L_{i,i}e^{\psi(N_t-1)}}{P_d^{-1}+\overline{T}^{-1}_tP_t^{-1}}.\label{eq:RApprox}
\end{align}
From this expression, the achievable throughput with CSI training depends on the power allocation between $P_t$ and $P_d$, and the training overhead $\overline{T}_t$, which motivates to optimize downlink training to improve the performance.

In general, the training optimization involves optimizing both $P_t$ and $T_t$. We first consider optimizing $P_t$. We assume different BSs will have the same $P_d$ and $P_t$, which is desirable for practical system implementation. Considering user $i$ ($i=1,2$), the training power allocation problem is $(P_t^\star,P_d^\star)=\arg\max_{\overline{T}_tP_t+\overline{T}_dP_d=\overline{T}P^{dl}}R_{i,\mathrm{T}}$.
With the approximation \eqref{eq:RApprox}, the problem is reformulated as
\begin{equation}\label{eq:PtOptProblem}
(P_t^\star,P_d^\star)=\arg\min_{\overline{T}_tP_t+\overline{T}_dP_d=\overline{T}P}\frac{1}{P_d}+\frac{1}{\overline{T}_tP_t}.
\end{equation}
From this formulation, the training power allocation is the same for different BSs and is independent of user locations. This fits the practical requirement, as each BS transmits the common pilot symbols for all users in the cell.

The minimization problem \eqref{eq:PtOptProblem} is a convex optimization problem, and following the KKT (Karush-Kuhn-Tucker) condition we can get the solution as
\begin{align}
P_d^\star=\frac{\overline{T}P^{dl}}{\sqrt{\overline{T}_d}(\sqrt{\overline{T}_d}+1)},\quad
P_t^\star=\frac{\overline{T}P^{dl}}{{\overline{T}_t}(\sqrt{\overline{T}_d}+1)}\label{eq:PtOpt}.
\end{align}

Next, we consider optimizing both $P_t$ and $\overline{T}_t$ ($N_B\leq\overline{T}_t<\overline{T}$). For a given $\overline{T}_t$, $P_t^\star$ is obtained as in \eqref{eq:PtOpt}. With the high-SNR approximation \eqref{eq:RApprox}, and substituting $P_t^\star$ and $P_d^\star$, we have the following approximation for the equivalent throughput considering training overhead
\begin{align}
\overline{R}_{i,T}=\left(1-\frac{\overline{T}_t}{\overline{T}}\right)R_{i,T}\approx
\left(1-\frac{\overline{T}_t}{\overline{T}}\right)\log_2\frac{L_{i,i}e^{\psi(N_t-1)}\overline{T}P^{dl}}{(\sqrt{\overline{T}-\overline{T}_t}+1)^2},\quad i=1,2,
\end{align}
which is concave in $T_t$, so the optimal length $T_t^\star$ ($N_B\leq\overline{T}_t<\overline{T}$)
can be found by a line search, e.g., using the bisection method. However, the solution will depend on the location of user $i$ through $L_{i,i}$. This means that we cannot simultaneously optimize $\overline{T}_t$ for users at different locations, which conflicts the common pilot design rule. Fortunately, this will not cause an issue, as the optimal $\overline{T}_t$ is obtained at its minimum possible value $N_B$ under a mild condition, stated in the following lemma.

\begin{lemma}\label{lemma:SuffCond}
A sufficient condition for $\overline{T}_t^\star=N_B$ is
\begin{equation}\label{eq:SuffCond}
P^{dl}L_{ii}>\exp\left[\frac{\sqrt{\overline{T}-N_B}}{\sqrt{\overline{T}-N_B}+1}
+2\log\left(\sqrt{1-\frac{N_B}{\overline{T}}}+\frac{1}{\overline{T}}\right)-\psi(N_t-1)\right]\quad i=1,2.
\end{equation}
\end{lemma}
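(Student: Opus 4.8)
The plan is to exploit the concavity of $\overline{R}_{i,T}$ in $\overline{T}_t$, which was already established just above the statement. For a concave function on the feasible interval $N_B\le\overline{T}_t<\overline{T}$, the maximizer sits at the left endpoint $\overline{T}_t=N_B$ precisely when the function is non-increasing there, i.e. when $\frac{\partial\overline{R}_{i,T}}{\partial\overline{T}_t}\big|_{\overline{T}_t=N_B}\le0$; concavity (a decreasing derivative) then propagates this sign to the whole interval, so that $\overline{R}_{i,T}$ is non-increasing on $[N_B,\overline{T})$ and hence $\overline{T}_t^\star=N_B$. Thus it suffices to compute the derivative, evaluate it at $N_B$, and show that the displayed inequality forces it to be non-positive.

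First I would differentiate the high-SNR expression $\overline{R}_{i,T}=(1-\overline{T}_t/\overline{T})\log_2\frac{C}{(\sqrt{\overline{T}-\overline{T}_t}+1)^2}$, with $C\triangleq L_{i,i}e^{\psi(N_t-1)}\overline{T}P^{dl}$, using the product rule. The prefactor contributes $-\frac{1}{\overline{T}}$ times the logarithmic factor, while the logarithm contributes a positive term via $\frac{d}{d\overline{T}_t}\ln(\sqrt{\overline{T}-\overline{T}_t}+1)=-\frac{1}{2\sqrt{\overline{T}-\overline{T}_t}(\sqrt{\overline{T}-\overline{T}_t}+1)}$. These two terms encode the training trade-off: more pilots sharpen the CSI estimate but steal channel uses from the data phase.

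Next I would set $\overline{T}_t=N_B$ and simplify. The key cancellation is that the factor $1-N_B/\overline{T}=(\overline{T}-N_B)/\overline{T}$ multiplying the positive term combines with the $\sqrt{\overline{T}-N_B}$ in the denominator to collapse the estimation-accuracy contribution to $\frac{1}{\overline{T}}\cdot\frac{\sqrt{\overline{T}-N_B}}{\sqrt{\overline{T}-N_B}+1}$. Clearing the common factor $1/(\overline{T}\ln2)$, the condition $\partial_{\overline{T}_t}\overline{R}_{i,T}|_{N_B}\le0$ reduces to $\ln C\ge \frac{\sqrt{\overline{T}-N_B}}{\sqrt{\overline{T}-N_B}+1}+2\ln(\sqrt{\overline{T}-N_B}+1)$. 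Substituting $C$, moving $\psi(N_t-1)$ and the $\ln\overline{T}$ term to the right-hand side (the latter recombining with $2\ln(\sqrt{\overline{T}-N_B}+1)$ into $2\log(\sqrt{1-N_B/\overline{T}}+1/\sqrt{\overline{T}})$), and exponentiating yields the stated sufficient condition on $P^{dl}L_{ii}$.

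I do not expect a deep obstacle: once concavity is granted, the argument is a one-variable endpoint analysis. The delicate points are purely bookkeeping—tracking signs through the product rule and executing the $(\overline{T}-N_B)$ cancellation correctly—together with the logarithmic recombination in the final step, where $-\ln\overline{T}$ must be absorbed into the $2\ln(\cdot)$ term. I note that this recombination naturally produces $1/\sqrt{\overline{T}}$ rather than $1/\overline{T}$ inside the logarithm of the stated bound, which appears to be a minor typographical discrepancy in the displayed condition.
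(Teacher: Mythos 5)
Your proposal is correct and takes essentially the same route as the paper: the paper substitutes $x=\sqrt{1-\overline{T}_t/\overline{T}}$ and demands the derivative be positive over the whole interval (reducing to the endpoint by monotonicity of the threshold function), whereas you differentiate directly in $\overline{T}_t$ and reduce to the endpoint via concavity, but both are the same one-variable endpoint analysis and produce the identical inequality. You are also right about the typographical discrepancy: the paper's own derivation gives $2\log\left(\sqrt{1-N_B/\overline{T}}+1/\sqrt{\overline{T}}\right)$, and its numerical example ($0.87$ dB for $T=100$, $N_t=4$, $N_B=2$) is consistent with $1/\sqrt{\overline{T}}$, not the $1/\overline{T}$ shown in \eqref{eq:SuffCond}.
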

\begin{proof}
See Appendix \ref{App:SuffCond}.
\end{proof}

\begin{remark}
This result shows that pilot/data power allocation is sufficient for training optimization under condition \eqref{eq:SuffCond}, and there is no need to optimize $\overline{T}_t$.
\begin{enumerate}
\item This is a condition on the average receive SNR at each user. If the average edge SNR is greater than the threshold, then all the users in the cell satisfy this condition.
\item The threshold at the right-hand side of \eqref{eq:SuffCond} depends on $N_B$, $N_t$, and $T$, and it is a decreasing function of $N_B$ and $N_t$. This condition is easy to satisfy. For example, if $T=100$, $N_B=2$, we need $P^{dl}L_{ii}>0.87$ dB for $N_t=4$, and $P^{dl}L_{ii}>-3.24$ dB for $N_t=8$.
\item A similar result is also shown in \cite{KobCai08ISIT} for single-cell multiuser MIMO systems. In fact, for a single-user MIMO system, there is never a need to optimize $T_t$ once $P_t$ is optimized \cite{HasHoc03IT}.
\end{enumerate}
\end{remark}

As condition \eqref{eq:SuffCond} is easy to satisfy, we will only consider pilot power optimization in the following, and assume $\overline{T}_t=N_B$. When $\overline{T}$ is very large, for fixed $\overline{T}_t$, we have $\overline{T}_d\rightarrow \overline{T}$, and from \eqref{eq:PtOpt} we have $P_d^\star\approx P^{dl}$, $P_t^\star\approx \frac{\sqrt{\overline{T}}}{N_B}P^{dl}$.
So the power allocated to training symbols is proportional to $\frac{\sqrt{\overline{T}}}{N_B}$, i.e., for long blocks, more power will be allocated to each training symbol.

\section{ICIC with CSI Training and Feedback ($T_t>0,T_{fb}>0$)}\label{Sec:TrainingFB}
In this section, we consider the CSI model with both training and feedback. Two types of feedback are discussed: analog feedback and digital feedback. For each feedback type, we first derive the average achievable throughput, and then optimize the training and feedback phases.

\subsection{Training with Analog Feedback}
We first consider \emph{analog feedback} where the estimated CSI at each user is fed back to the BS using unquantized and uncoded QAM \cite{MarHoc06Tsp,SamMan06Tcomm}. The uplink feedback channel is assumed to be an unfaded AWGN channel as in \cite{KobCai08ISIT,KobJin09}. As each user needs to feed back CSI for $N_B$ BSs, we divide the feedback block $T_{fb}$ into $N_B$ equal-length sub-blocks. During the $j$th sub-block, at user $i$, $i=1,2$, the estimated CSI, or equivalently a scaled version of the received training vector $\bs_{i,j}$ ($j=1,2$), given in \eqref{eq:s_k}, is modulated by a $\frac{T_{fb}}{N_B}\times N_t$ unitary spreading matrix with the power constraint $\frac{T_{fb}}{N_B}P_{fb,ij}$ \cite{MarHoc06Tsp}. We assume orthogonal feedback, so $\frac{T_{fb}}{N_B}\geq N_BN_t$, i.e., $T_{fb}\geq N_B^2N_t$. Although the feedback can be received by both BSs, the home BS $i$ is responsible for the final channel estimation as it is closer to user $i$, i.e., BS $i$ will estimate both $\bh_{i,1}$ and $\bh_{i,2}$ and will pass the estimation to the neighboring BS over the backhaul link. In the following discussion, we fix ${T}_{fb}$ and focus on the power allocation $(P_{fb,i1},P_{fb,i2})$, with the constraint
\begin{align}\label{eq:aFB_con}
\frac{T_{fb}}{N_B}P_{fb,i1}+\frac{T_{fb}}{N_B}P_{fb,i2}=T_{fb}P^{ul}\quad \mbox{ or }\quad
P_{fb,i1}+P_{fb,i2}=P^{ul}N_B,
\end{align}
where $P^{ul}$ is the uplink transmit power constraint.

As the uplink channel is modeled as an unfaded AWGN channel with pathloss, the received feedback vector at the $i$th BS after de-spreading is
\begin{align}
\bg_{i,j}&=\frac{\sqrt{\frac{{T}_{fb}}{N_B}P_{fb,ij}L_{i,i}}}{\sqrt{1+\overline{T}_tP_tL_{i,j}}}\bs_{i,j}+\bw_{i,j}
=\frac{\sqrt{\frac{T_{fb}}{N_B}P_{fb,ij}L_{i,i}}}{\sqrt{1+\overline{T}_tP_tL_{i,j}}}\left(\sqrt{\overline{T}_tP_tL_{i,j}}\bh_{i,j}+\bz_{i,j}\right)+\bw_{i,j}\notag\\
&=\frac{\sqrt{\frac{T_{fb}}{N_B}P_{fb,ij}L_{i,i}\overline{T}_tP_tL_{i,j}}}{\sqrt{1+\overline{T}_tP_tL_{i,j}}}\bh_{i,j}+\tilde{\bw}_{i,j},\quad j=1,2,
\end{align}
where $\tilde{\bw}_{i,j}$ is the equivalent noise, and $\tilde{\bw}_{i,j}\sim\mathcal{CN}(0,\tilde{\sigma}^2_{i,j})$ with $\tilde{\sigma}^2_{i,j}=\frac{\frac{T_{fb}}P_{fb,ij}L_{i,i}}{1+\overline{T}_tP_tL_{i,j}}+1$. If $i=j$, the feedback is for the home BS channel, which determines the signal power; if $i\neq j$, the feedback is for the neighboring BS channel, which is related to the interference level. This motivates the feedback power allocation with the constraint \eqref{eq:aFB_con}.

The MMSE estimate of the channel vector is
\begin{equation}
\hat{\bh}_{i,j}=\frac{\sqrt{\frac{T_{fb}}{N_B}P_{fb,ij}L_{i,i}\overline{T}_tPL_{i,j}}}{\sqrt{1+\overline{T}_tP_tL_{i,j}}\left(\frac{T_{fb}}{N_B}P_{fb}L_{i,i}+1\right)}\bg_{i,j}.
\end{equation}
Then the actual channel vector $\bh_{i,j}$ can be written as $\bh_{i,j}=\hat{\bh}_{i,j}+\hat{\bee}_{i,j}$,
where $\hat{\bh}_{i,j}$ and $\hat{\bee}_{i,j}$ are independent with variances
$\hat{\kappa}^2_{i,j}=\frac{\overline{T}_tP_tL_{i,j}\cdot \frac{T_{fb}}{N_B}P_{fb,ij}L_{i,i}}{(1+\overline{T}_tP_tL_{i,j})\left(1+\frac{T_{fb}}{N_B}P_{fb,ij}L_{i,i}\right)}$ and $\hat{\sigma}^2_{i,j}=1-\hat{\kappa}^2_{i,j}$, respectively.

The precoding vectors are designed assuming that $\hat{\bh}_{i,j}$ ($i,j=1,2$) are the actual CSI. As the distribution of $\hat{\bh}_{i,j}$ is similar to $\tilde{\bh}_{i,j}$ with different variances, following the same derivation for the case with training only, we can get the following proposition.

\begin{proposition}
The average achievable throughput of user $i$ with training and analog feedback is approximated as
\begin{equation}\label{eq:Rate_AFB}
R_{i,\mathrm{aFB}}(s_1,s_2)\approx\left\{\begin{array}{ll}
\mathcal{R}^{(2)}(\hat{\kappa}_{i,i}^2P_dL_{i,i},P_dL_{i,\bar{i}},N_t) & (s_i,s_{\bar{i}})=(\mathrm{BF},\mathrm{BF})\\
\mathcal{R}^{(2)}(\hat{\kappa}_{i,i}^2P_dL_{i,i},\hat{\sigma}_{i,\bar{i}}^2P_dL_{i,\bar{i}},N_t) & (s_i,s_{\bar{i}})=(\mathrm{BF},\mathrm{IC})\\
\mathcal{R}^{(2)}(\hat{\kappa}_{i,i}^2P_dL_{i,i},\hat{\sigma}_{i,\bar{i}}^2P_dL_{i,\bar{i}},N_t-1) & (s_i,s_{\bar{i}})=(\mathrm{IC},\mathrm{IC})\\
\mathcal{R}^{(2)}(\hat{\kappa}_{i,i}^2P_dL_{i,i},P_dL_{i,\bar{i}},N_t-1) & (s_i,s_{\bar{i}})=(\mathrm{IC},\mathrm{BF})
\end{array}\right.
\end{equation}
where $\mathcal{R}^{(2)}$ is given in \eqref{eq:Rate_I2}.
\end{proposition}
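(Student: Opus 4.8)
The plan is to reduce this proposition to \emph{Proposition~\ref{Prop:Training}} by exhibiting a complete structural correspondence between the training-plus-feedback CSI model and the training-only model. The essential observation, already flagged in the paragraph preceding the statement, is that the effective decomposition $\bh_{i,j}=\hat{\bh}_{i,j}+\hat{\bee}_{i,j}$ has exactly the same form as the training-only decomposition $\bh_{i,j}=\tilde{\bh}_{i,j}+\bn_{i,j}$ in~\eqref{eq:H-Hest}: both $\hat{\bh}_{i,j}$ and $\hat{\bee}_{i,j}$ are zero-mean, mutually independent, and have i.i.d.\ $\mathcal{CN}$ entries, with per-entry variances $\hat{\kappa}_{i,j}^2$ and $\hat{\sigma}_{i,j}^2=1-\hat{\kappa}_{i,j}^2$ respectively. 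Since the precoders $\bff_{i,\mathrm{BF}}$ and $\bff_{i,\mathrm{IC}}$ are built from $\hat{\bh}_{i,j}$ in precisely the way they were built from $\tilde{\bh}_{i,j}$, every distributional computation in Appendix~\ref{App:PropTraining} transfers once the variance pair $(\kappa_{i,j}^2,\sigma_{i,j}^2)$ is replaced by $(\hat{\kappa}_{i,j}^2,\hat{\sigma}_{i,j}^2)$.

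First I would recompute the signal power $|\bh_{i,i}^*\bff_{i,s_i}|^2$ for the two home strategies. Writing $\bh_{i,i}=\hat{\bh}_{i,i}+\hat{\bee}_{i,i}$, where the unit vector $\bff_{i,s_i}$ depends only on $\hat{\bh}_{i,i}$, the useful component $\hat{\bh}_{i,i}^*\bff_{i,s_i}$ yields $\|\hat{\bh}_{i,i}\|^2\sim\hat{\kappa}_{i,i}^2\chi^2_{2N_t}$ under beamforming and $\hat{\kappa}_{i,i}^2\chi^2_{2(N_t-1)}$ under ICIC, the single lost degree of freedom coming from the null-space projection constraint (with $N_B=2$, so $K-1=1$). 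The cross term $\hat{\bee}_{i,i}^*\bff_{i,s_i}$ is self-noise from the home estimation error; the approximation in~\eqref{eq:Rate_AFB} absorbs it exactly as in the training-only case, so the signal scaling is $\alpha=\hat{\kappa}_{i,i}^2P_dL_{i,i}$ in all four rows.

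Next I would treat the interference power $|\bh_{i,\bar{i}}^*\bff_{\bar{i},s_{\bar{i}}}|^2$, which depends only on the interferer's strategy $s_{\bar{i}}$. If $s_{\bar{i}}=\mathrm{BF}$, the precoder $\bff_{\bar{i},\mathrm{BF}}$ is a function of $\hat{\bh}_{\bar{i},\bar{i}}$ and is independent of the cross channel $\bh_{i,\bar{i}}$, so the projection gives a single complex Gaussian and the interference is $P_dL_{i,\bar{i}}\chi^2_2$, with no variance factor. If $s_{\bar{i}}=\mathrm{IC}$, then $\bff_{\bar{i},\mathrm{IC}}$ is chosen to null the estimate $\hat{\bh}_{i,\bar{i}}$, so $\hat{\bh}_{i,\bar{i}}^*\bff_{\bar{i},\mathrm{IC}}=0$ and only the error leaks through, giving $\bh_{i,\bar{i}}^*\bff_{\bar{i},\mathrm{IC}}=\hat{\bee}_{i,\bar{i}}^*\bff_{\bar{i},\mathrm{IC}}$ and residual interference $\hat{\sigma}_{i,\bar{i}}^2P_dL_{i,\bar{i}}\chi^2_2$. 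Because signal and interference involve statistically independent channels from distinct BSs, they are independent, and substituting the signal degrees of freedom $M\in\{N_t,N_t-1\}$ and the scalings $(\alpha,\beta)$ into the closed form $\mathcal{R}^{(2)}$ of \emph{Lemma~\ref{lemma:Rate}} reproduces the four cases of~\eqref{eq:Rate_AFB}.

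The main obstacle is not any new computation but justifying that the single approximation inherited from \emph{Proposition~\ref{Prop:Training}}---treating the residual-interference direction $\bff_{\bar{i},\mathrm{IC}}$ as independent of the error $\hat{\bee}_{i,\bar{i}}$, and neglecting the self-noise cross term in the signal---remains valid here. I would argue this holds because the MMSE error $\hat{\bee}_{i,j}$ is independent of its estimate $\hat{\bh}_{i,j}$, and the precoders are measurable functions of the estimates alone; hence the independence structure exploited in the training-only derivation is reproduced exactly and the approximation has the same fidelity. With this, the result follows by direct substitution.
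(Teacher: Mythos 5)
Your proposal is correct and follows essentially the same route as the paper, which proves this proposition simply by observing that $\hat{\bh}_{i,j}$ has the same Gaussian structure as $\tilde{\bh}_{i,j}$ with variances $(\hat{\kappa}_{i,j}^2,\hat{\sigma}_{i,j}^2)$ replacing $(\kappa_{i,j}^2,\sigma_{i,j}^2)$ and then reusing the derivation of Proposition~1 verbatim. Your recapitulation of the signal and interference distributions matches Appendix~A step for step, so no further comment is needed.
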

This result can be used for adaptive transmission strategy selection.

In the following, we will optimize both training (optimizing $(P_d,P_t)$) and feedback (optimizing $(P_{fb,i1},P_{fb,i2})$ for $i=1,2$). We assume that $T_{fb}$ is fixed, as the modification of $T_{fb}$ will affect the uplink traffic channel while our discussion focuses on the downlink transmission. A systematic design of both downlink and uplink transmissions is beyond the scope of this paper.

\subsubsection{Training Optimization}
We first consider training optimization. At high SNR, similar to \emph{Lemma \ref{lemma:rateloss}}, the rate loss due to training and analog feedback can be approximated as
\begin{align}
R_i-R_{i,\mathrm{aFB}}\approx
\log_2\left(1+L_{i,\bar{i}}P_d\left(\frac{1}{\frac{T_{fb}}{N_B}P_{fb,i\bar{i}}L_{i,i}}+\frac{1}{\overline{T}_tP_tL_{i,\bar{i}}}\right)\right),\label{eq:Rbound_aFB}
\end{align}
which is a constant rate loss if $\frac{P_d}{P_t}$ and $\frac{P_d}{P_{fb,i\bar{i}}}$ are constants.

Substituting \eqref{eq:R_CSI}, we get the following high-SNR approximation for $R_{i,\mathrm{aFB}}$
\begin{equation}\label{eq:41}
R_{i,\mathrm{aFB}}\approx\log_2\frac{L_{i,i}e^{\psi(N_t-1)}}{P_d^{-1}+\left(\bar{T}_tP_t\right)^{-1}+
\left(\frac{T_{fb}}{N_B}\frac{L_{i,i}}{L_{i,\bar{i}}}P_{fb,i\bar{i}}\right)^{-1}}.
\end{equation}
For given $T_{fb}$ and $P_{fb,i\bar{i}}$, the throughput maximization problem is equivalent to
\begin{equation}
\min_{\overline{T}_tP_t+\overline{T}_dP_d=(\overline{T}-\overline{T}_{fb})P^{dl}}\frac{1}{P_d}+\frac{1}{\overline{T}_tP_t},
\end{equation}
which is independent of the feedback allocation. This problem is the same as \eqref{eq:PtOpt}, and we get the following results
\begin{align}
P_d^\star=\frac{(\overline{T}-\overline{T}_{fb})P^{dl}}{\sqrt{\overline{T}_d}(\sqrt{\overline{T}_d}+1)},\quad
P_t^\star=\frac{(\overline{T}-\overline{T}_{fb})P^{dl}}{{\overline{T}_t}(\sqrt{\overline{T}_d}+1)}.\label{eq:P_tOptaFB}
\end{align}
The solution depends only on the intervals of different transmission phases, i.e., $T_t$, $T_d$, and $T_{fb}$.

\subsubsection{Feedback Optimization}
Next, we consider feedback optimization, i.e., optimizing $(P_{fb,i1},P_{fb,i2})$ for $i=1,2$. Note that the uplink feedback optimization is done individually for each user, while the downlink training optimization is the same for all users. The feedback optimization is over the following approximation for the average SINR
\begin{equation}\label{eq:SINRapprox_aFB}
\overline{\mbox{SINR}}_i\approx\frac{P_dL_{i,i}\hat{\kappa}_{i,i}^2(N_t-1)}{1+P_dL_{i,\bar{i}}\hat{\sigma}_{i,\bar{i}}^2}, \quad i=1,2.
\end{equation}
This is reasonable as $\log_2(1+\overline{\mbox{SINR}}_i)$ gives an upper bound on the average achievable rate for user $i$. From \eqref{eq:SINRapprox_aFB}, the feedback power allocation problem at user $i$ can be stated as
\begin{equation}\label{eq:aFB_PA}
(P^\star_{fb,ii},P^\star_{fb,i\bar{i}})=\arg\max_{P_{fb,ii}+P_{fb,i\bar{i}}=N_BP^{ul}}\frac{\hat{\kappa}_{i,i}^2}{1+P_dL_{i,\bar{i}}\hat{\sigma}_{i,\bar{i}}^2}.
\end{equation}
Increasing $P_{fb,ii}$ will increase $\hat{\kappa}^2_{i,i}$ in the signal term and also increase $\hat{\sigma}^2_{i,\bar{i}}$ in the interference term, and feedback power allocation will balance these different effects.

Denote $x\triangleq\frac{T_{fb}}{N_B}P_{fb,ii}L_{i,i}$, $a\triangleq\overline{T}_tP_tL_{i,\bar{i}}$, $b\triangleq P_dL_{i,\bar{i}}$, and $\rho\triangleq T_{fb}P^{ul}L_{i,i}$, then problem \eqref{eq:aFB_PA} is equivalent to
\begin{equation}
\max_{0\leq x\leq\rho}\frac{\frac{x}{1+x}}{1+b\cdot\frac{1+a+\rho-x}{(1+a)(1+\rho-x)}}.
\end{equation}
Denote $\lambda_1\triangleq1+\rho$, $\lambda_2\triangleq\frac{ab}{1+a+b}$, the objective function can be rewritten as
\begin{equation}
\frac{1+a}{1+a+b}\left\{1+\frac{1}{1+\lambda_1+\lambda_2}\left[\frac{(\lambda_1+\lambda_2)\lambda_2}{x-(\lambda_1+\lambda_2)}-\frac{1+\lambda_1}{x+1}\right]\right\}.
\end{equation}
So the maximization problem is equivalent to $\max_{0\leq x\leq\rho}f(x)$ with $f(x)\triangleq\frac{(\lambda_1+\lambda_2)\lambda_2}{x-(\lambda_1+\lambda_2)}-\frac{1+\lambda_1}{x+1}$.
As $\lambda_1>0$, $\lambda_2>0$, and $\lambda_1=\rho+1>x$, the first and second terms are both concave, so the objective function is concave.
Setting $\frac{\partial f(x)}{\partial x}=0$, we have
\begin{equation}\label{eq:lambda}
\left[(1+\lambda_1)-(\lambda_1+\lambda_2)\lambda_2\right]x^2-2(\lambda_1+\lambda_2)(1+\lambda_1+\lambda_2)x
+\lambda_1(\lambda_1+\lambda_2)(1+\lambda_1+\lambda_2)=0.
\end{equation}
Denote $(x_1^\star,x_2^\star)$ as the solution pair of \eqref{eq:lambda}, if $x_i^\star\in[0,\rho]$, $i=1,2$, then it is the solution for the original problem; otherwise, the maximal value is obtained at the edge and $x^\star=\rho$ is the solution, as $x=0$ makes the objective function to be 0 which is obviously not the maximum.

\subsection{Training with Digital Feedback}
In this part, we consider \emph{digital feedback}, also called \emph{finite-rate} or \emph{limited feedback} \cite{Jin06IT,LovHea08JSAC}, which feeds back quantized CSI. We assume user $i$ ($i=1,2$) feeds back a total of $B_i$ bits, among which $B_{i1}$ bits is for the channel estimate $\tilde{\bh}_{i,1}$ of BS 1 and $B_{i2}$ bits for the channel estimate $\tilde{\bh}_{i,2}$ of BS 2. The feedback channel is assumed to be error-free and without delay. The feedback interval is $T_{fb}=\mu B_i$, where $\mu$ is a conversion factor that relates bits to symbols, e.g., $\mu=1$ for BPSK feedback.

The channel estimate $\tilde{\bh}_{i,j}$, $i,j=1,2$, is fed back using a quantization codebook known at both the transmitter and receiver, which consists of unit norm vectors of size $2^{B_{ij}}$. We assume each user has multiple quantization codebooks, with the codebook of size $2^{B_{i,j}}$ denoted as $\mathcal{C}_{i,j}=\{\mathbf{c}_{1},\mathbf{c}_{2},\cdots,\mathbf{c}_{2^{B_{ij}}}\}$. Note that in practice the number of codebooks at each user may be limited to a small number, which is not considered in the paper. The quantized channel vector is $\hat{\bh}_{i,j}=\arg\max_{\bc_\ell\in\mathcal{C}_{i,j}}\left|{\frac{\tilde{\bh}_{i,j}}{\|\tilde{\bh}_{i,j}\|}}\bc_\ell\right|$.
The random vector quantization (RVQ) codebook \cite{Jin06IT,YueLov07Twc} is used to facilitate the analysis, where each quantization vector is independently chosen from the isotropic distribution on the $N_t$-dimensional unit sphere. Denote $\cos\theta_{i,j}=\left|\frac{\tilde{\bh}_{i,j}^*}{\|\tilde{\bh}_{i,j}\|}\hat{\bh}_{i,j}\right|$, and then with RVQ we have \cite{YueLov07Twc}
\begin{equation}\label{eq:xi}
\xi_{i,j}\triangleq\mathbb{E}_{\theta_{i,j}}\left[\cos^2\theta_{i,j}\right]=1-2^{B_{i,j}}\cdot\beta\left(2^{B_{i,j}},\frac{N_t}{N_t-1}\right)\geq1-2^{-\frac{B_{i,j}}{N_t-1}},
\end{equation}
where $\beta(x,y)$ is the Beta function.

We first derive the average achievable throughput for given $B_{i,j}$, $i,j=1,2$.

\begin{proposition}\label{Prop:dFB}
The average achievable throughput for user $i$ with training and digital feedback is approximated as
\begin{equation}\label{eq:Rate_dFB}
R_{i,\mathrm{dFB}}(s_1,s_2)\approx\left\{\begin{array}{ll}
\mathcal{R}^{(2)}({\kappa}_{i,i}^2\xi_{i,i}P_dL_{i,i},P_dL_{i,\bar{i}},N_t) & (s_i,s_{\bar{i}})=(\mathrm{BF},\mathrm{BF})\\
\mathcal{R}^{(3)}({\kappa}_{i,i}^2\xi_{i,i}P_dL_{i,i},{\sigma}_{i,\bar{i}}^2P_dL_{i,\bar{i}},\kappa_{i,\bar{i}}^22^{-\frac{B_{i,\bar{i}}}{N_t-1}}P_dL_{i,\bar{i}},N_t) & (s_i,s_{\bar{i}})=(\mathrm{BF},\mathrm{IC})\\
\mathcal{R}^{(3)}({\kappa}_{i,i}^2\xi_{i,i}P_dL_{i,i},{\sigma}_{i,\bar{i}}^2P_dL_{i,\bar{i}},\kappa_{i,\bar{i}}^22^{-\frac{B_{i,\bar{i}}}{N_t-1}}P_dL_{i,\bar{i}},N_t-1) & (s_i,s_{\bar{i}})=(\mathrm{IC},\mathrm{IC})\\
\mathcal{R}^{(2)}({\kappa}_{i,i}^2\xi_{i,i}P_dL_{i,i},P_dL_{i,\bar{i}},N_t-1) & (s_i,s_{\bar{i}})=(\mathrm{IC},\mathrm{BF})
\end{array}\right.
\end{equation}
where $\mathcal{R}^{(2)}$ and $\mathcal{R}^{(3)}$ are given by \eqref{eq:Rate_I2} and \eqref{eq:Rate_3cell}, respectively.
\end{proposition}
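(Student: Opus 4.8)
The plan is to identify, for each strategy pair, the effective signal power $Z$ and interference power $Y$ appearing in $\mathrm{SINR}_i=\frac{P_dL_{i,i}Z}{1+P_dL_{i,\bar i}Y}$, match them to the chi-square forms of Lemma~\ref{lemma:Rate}, and invoke $\mathcal{R}^{(2)}$ or $\mathcal{R}^{(3)}$ accordingly. The derivation parallels that of Proposition~\ref{Prop:Training} and the analog-feedback proposition; the genuinely new ingredient is the quantization error, which I would handle through a two-stage decomposition of each channel: first the MMSE training error, then the RVQ quantization.

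First I would treat the signal power. Write $\bh_{i,i}=\tilde\bh_{i,i}+\bn_{i,i}$ as in \eqref{eq:H-Hest}, where $\bn_{i,i}$ is independent of $\tilde\bh_{i,i}$ and hence of its quantization. Since the precoder $\bff_{i,s_i}$ is built only from the quantized estimates, the useful part of $\bh_{i,i}^*\bff_{i,s_i}$ comes from $\tilde\bh_{i,i}^*\bff_{i,s_i}$, whose magnitude squared factorizes (after neglecting the zero-mean MMSE-noise cross term) as $\|\tilde\bh_{i,i}\|^2\cos^2\theta_{i,i}$ through the angle defined before \eqref{eq:xi}. Replacing $\cos^2\theta_{i,i}$ by its RVQ mean $\xi_{i,i}$ and using $\|\tilde\bh_{i,i}\|^2\sim\kappa_{i,i}^2\chi^2_{2N_t}$, the effective signal power is $Z\sim\kappa_{i,i}^2\xi_{i,i}\chi^2_{2M}$, with $M=N_t$ for beamforming and $M=N_t-1$ for ICIC (the projection onto the $(N_t-1)$-dimensional null space). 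This fixes the first argument $\alpha=\kappa_{i,i}^2\xi_{i,i}P_dL_{i,i}$ and the degrees of freedom in every row.

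Next I would treat the interference power, which is where the four cases split. When the neighbor uses beamforming (rows one and four), $\bff_{\bar i,\mathrm{BF}}$ depends only on $\bh_{\bar i,\bar i}$ and is independent of $\bh_{i,\bar i}$, so $|\bh_{i,\bar i}^*\bff_{\bar i,\mathrm{BF}}|^2\sim\chi^2_2$ and a single term $\beta=P_dL_{i,\bar i}$ places us in $\mathcal{R}^{(2)}$. When the neighbor uses ICIC (rows two and three), $\bff_{\bar i,\mathrm{IC}}$ is orthogonal to the quantized estimate $\hat\bh_{i,\bar i}$, so projecting $\bh_{i,\bar i}=\tilde\bh_{i,\bar i}+\bn_{i,\bar i}$ onto $\bff_{\bar i,\mathrm{IC}}$ leaves two residuals: the MMSE-noise term $\bn_{i,\bar i}^*\bff_{\bar i,\mathrm{IC}}\sim\mathcal{CN}(0,\sigma_{i,\bar i}^2)$, giving $\beta_1=\sigma_{i,\bar i}^2P_dL_{i,\bar i}$, and the quantization term $\|\tilde\bh_{i,\bar i}\|\sin\theta_{i,\bar i}(\bg^*\bff_{\bar i,\mathrm{IC}})$, where $\bg$ is the quantization-error direction. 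Modeling the latter with the RVQ bound $\sin^2\theta_{i,\bar i}\approx 2^{-B_{i,\bar i}/(N_t-1)}$ from \eqref{eq:xi} and $\|\tilde\bh_{i,\bar i}\|^2\sim\kappa_{i,\bar i}^2\chi^2_{2N_t}$ yields a second scaled chi-square with $\beta_2=\kappa_{i,\bar i}^2 2^{-B_{i,\bar i}/(N_t-1)}P_dL_{i,\bar i}$, so that $Y=\beta_1Y_1+\beta_2Y_2$ and we land in $\mathcal{R}^{(3)}$.

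The hard part is justifying the interference decomposition in the ICIC-neighbor case, namely that the quantization residual is independent of the MMSE-noise residual and can be replaced by a single scaled $\chi^2_2$. This rests on the standard RVQ approximation from the limited-feedback literature \cite{Jin06IT,YueLov07Twc}: the error direction $\bg$ is taken isotropic in the orthogonal complement of $\hat\bh_{i,\bar i}$ and independent of its magnitude, while $\sin^2\theta_{i,\bar i}$ is replaced by its mean bound. Independence of the two residuals then follows from the MMSE property that $\bn_{i,\bar i}$ is independent of $\tilde\bh_{i,\bar i}$ (and hence of its quantization). Once these approximations are in place, substituting the identified $(\alpha,\beta,M)$ or $(\alpha,\beta_1,\beta_2,M)$ into Lemma~\ref{lemma:Rate} reproduces the four rows of \eqref{eq:Rate_dFB}.
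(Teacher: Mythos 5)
Your proposal matches the paper's proof in Appendix D essentially step for step: the same two-stage decomposition of each channel (MMSE training error plus RVQ quantization error), the same approximations (dropping the cross terms, replacing $\cos^2\theta_{i,i}$ by its mean $\xi_{i,i}$, modeling the quantization residual in the interference as a scaled $\chi^2_2$), and the same final matching of the resulting signal and interference distributions to $\mathcal{R}^{(2)}$ and $\mathcal{R}^{(3)}$ of Lemma~\ref{lemma:Rate}. The only cosmetic difference is that the paper cites \cite{ZhaRob09EURASIP} for the exponential distribution of the quantization-error interference term, whereas you argue it directly from RVQ isotropy.
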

\begin{proof}
See Appendix \ref{App:dFB}.
\end{proof}

This result can be used for adaptive transmission strategy selection at low SNR. In the following, we optimize training and digital feedback for ICIC.

\subsubsection{Training Optimization}
Similar to \eqref{eq:Rbound_aFB}, we first get the following approximation for the rate loss due to training and digital feedback
\begin{equation}
R_i-R_{i,\mathrm{dFB}}\approx\log_2\left(1+L_{i,\bar{i}}P_d\left(\frac{1}{\overline{T}_tP_tL_{i,\bar{i}}}+2^{-\frac{B_{i,\bar{i}}}{N_t-1}}\right)\right),\quad i=1,2.
\end{equation}
With $P_d,P_t\rightarrow\infty$ and $\frac{P_d}{P_t}=\nu$, the rate loss is approximately $\log_2\left(1+\frac{\nu}{\overline{T}_t}+L_{i,\bar{i}}P_d2^{-\frac{B_{i,\bar{i}}}{N_t-1}}\right)$, which grows with $P_d$ for fixed $B_{i,\bar{i}}$. This shows that the system throughput is limited by the residual interference due to the quantization error. This effect can be eliminated if $B_{i,\bar{i}}$ is allowed to increase with the uplink transmission power.

Then we can get the following approximation for the average achievable rate for user $i$ ($i=1,2$)
\begin{equation}\label{eq:Rate_dFB2}
R_{i,\mathrm{dFB}}\approx\log_2\frac{L_{i,i}\xi_{i,i}e^{\psi(N_t-1)}}{P_d^{-1}+\left(\bar{T}_tP_t\right)^{-1}+
L_{i,\bar{i}}2^{-\frac{B_{i,\bar{i}}}{N_t-1}}}.
\end{equation}
This is similar to \eqref{eq:41} for analog feedback. Therefore, the optimal $(P^\star_d,P^\star_t)$ are also given in \eqref{eq:P_tOptaFB}, and they are independent of the feedback bit allocation in the uplink.

\subsubsection{Feedback Optimization}
We assume each user can adaptively select the number of feedback bits and apply the corresponding quantization codebook for channel feedback for different BSs. The feedback optimization is based on the following approximation for the average SINR
\begin{equation}
\overline{\mbox{SINR}}_i\approx\frac{P_dL_{i,i}\xi_{i,i}(N_t-1)}{1+P_dL_{i,\bar{i}}\sigma^2_{i,\bar{i}}+P_dL_{i,\bar{i}}\kappa_{i,\bar{i}}2^{-\frac{B_{i,\bar{i}}}{N_t-1}}},\quad i=1,2,
\end{equation}
which follows the distributions of signal and interference terms. Applying the bound in \eqref{eq:xi}, the feedback bit allocation problem is formulated as
\begin{equation}\label{eq:Bopt}
(B^\star_{i1},B^\star_{i2})=\arg\max_{B_{i1}+B_{i2}=B_i}\frac{1-2^{-\frac{B_{i,i}}{N_t-1}}}{1+P_dL_{i,\bar{i}}\sigma^2_{i,\bar{i}}+P_dL_{i,\bar{i}}\kappa^2_{i,\bar{i}}2^{-\frac{B_{i,\bar{i}}}{N_t-1}}},\quad i=1,2,
\end{equation}
which is done individually at each user. This is an integer programming problem. To get an analytical solution, we will first relax the constraint that $B_{i1,}$ and $B_{i2}$ have to be integers.

Denote $a_0\triangleq\frac{1+P_dL_{i,\bar{i}}\sigma^2_{i,\bar{i}}}{P_dL_{i,\bar{i}}\kappa^2_{i,\bar{i}}}$, $X_0\triangleq2^{-\frac{B_i}{N_t-1}}$, and $x\triangleq2^{-\frac{B_{i,i}}{N_t-1}}$. Then problem \eqref{eq:Bopt} is reformulated as
\begin{equation}
x^\star=\arg\max_{X_0<x<1}\frac{1-x}{\frac{X_0}{x}+a_0}.
\end{equation}
The objective function $f(x)\triangleq\frac{1-x}{\frac{X_0}{x}+a_0}$ is concave, and set $\frac{\partial f(x)}{\partial x}=0$, we have the solution $x^\star=\sqrt{\frac{X_0}{a_0}+\frac{X_0^2}{a_0^2}}-\frac{X_0}{a_0}$.
Then the following solution will be used for feedback bits allocation
\begin{align}
B_{i,i}^\star=\lfloor-(N_t-1)\log_2x^\star\rfloor,\quad
B_{i,\bar{i}}^\star=B_i-\lfloor-(N_t-1)\log_2x^\star\rfloor.\label{eq:dFB_BA1}
\end{align}


\section{Numerical Results}\label{Sec:Num}
In this section, we present numerical results to verify our analysis and demonstrate the performance gain with training and feedback optimization. We focus on the 2-cell model in Fig. \ref{fig:2cell}, where the users are located on the line connecting the two BSs. We assume that the downlink and uplink transmissions have the same power constraint, i.e., $P^{dl}=P^{ul}$. The cell radius is $R=1$ km, the pathloss exponent is $\alpha=3$, and $N_t=4$. As we do not consider training and feedback overhead optimization, the training and feedback intervals are fixed to be $T_t=N_BN_t$ and $T_{fb}=N_B^2N_t$, for $N_B=2$, respectively.

\subsection{Simulation vs. Approximation}
In Fig. \ref{fig:SimvsCal}, we compare the derived approximations and simulations for the achievable throughput with training and digital feedback. Both approximation \eqref{eq:Rate_dFB} and high-SNR approximation \eqref{eq:Rate_dFB2} are shown. For feedback bits $B$, we consider two scenarios: \textbf{Fixed feedback bits}: $\mu B=T_{fb}$, in the simulation, we use $\mu=1$, which represents BPSK feedback, as in \cite{SanHon10IT}; \textbf{Varying feedback bits}: the total number of feedback bits for user $i$ is $B_i=\lfloor T_{fb}\log_2(1+P^{ul}L_{i,i})\rfloor$, $i=1,2$, i.e., we assume perfect feedback at the rate of the uplink capacity, as in \cite{KobJin09}.

The accuracy of \eqref{eq:Rate_dFB} is shown in the fixed feedback case, where the high-SNR approximation \eqref{eq:Rate_dFB2} appears to be a lower bound. With varying feedback bits, as simulation is of prohibitive complexity with large $B$, we only compare two approximations, and it shows the high-SNR approximation \eqref{eq:Rate_dFB2} is accurate for this case. With varying feedback bits, the performance of ICIC with training and feedback approaches the perfect CSI case. However, in practice the number of feedback bits is normally fixed, and this will be the assumption in the following discussion.

\subsection{Training and Feedback Optimization}
In Fig. \ref{fig:OPTfig}, we compare the following different systems:
\begin{itemize}
\item Training + aFB I: analog feedback without optimization;
\item Training + aFB II: analog feedback with training optimization \eqref{eq:P_tOptaFB};
\item Training + aFB III: analog feedback with training and feedback optimization \eqref{eq:P_tOptaFB}, \eqref{eq:aFB_PA};
\item Training + dFB I: digital feedback without optimization;
\item Training + dFB II: digital feedback with training optimization \eqref{eq:P_tOptaFB};
\item Training + dFB III: digital feedback with training and feedback optimization \eqref{eq:P_tOptaFB}, \eqref{eq:dFB_BA1}.
\end{itemize}
We see that training and feedback optimization provides significant performance gains.
\begin{itemize}
\item With analog feedback, training optimization is more important, and additional uplink feedback power allocation provides limited performance gain. This is because the training power allocation is performed within $T-T_{fb}$ symbols, and $P_t\sim\sqrt{T}P^{dl}$ provides significant performance gain over equal power allocation; while the uplink feedback power allocation is within $T_{fb}$ symbols, and normally $T\gg T_{fb}$.
\item With digital feedback, training optimization alone provides little performance gain, and the feedback bit allocation is more important. This is because we assume a fixed number of feedback bits and the uplink is the limiting factor for the channel estimation accuracy, so the feedback optimization is more important.
\end{itemize}
These observations indicate that we can focus on the downlink training optimization for analog feedback while the uplink feedback should be carefully designed when employing digital feedback.

\subsection{Average and Edge Throughput}
In Fig. \ref{fig:R_PA}, the average sum throughput and edge throughput, which is represented by the 5th percentile throughput, are compared for different systems with 2 users randomly located on the line connecting 2 BSs. The adaptive ICIC system and the conventional single-cell beamforming system with perfect CSI are also shown for comparison. The system with and without training and feedback optimization are shown for both analog and digital feedbacks. Although these optimization procedures are developed for the high SNR regime, they will be used for all SNR values, as we have already observed the accuracy of the high-SNR approximation in Fig. \ref{fig:SimvsCal} and we are more interested in the high-SNR regime where multicell processing provides higher performance gain. Adaptive transmission strategy selection is applied once the training and feedback are optimized. We can see that the analog feedback system provides performance close to the perfect CSI case, and the digital feedback system with $B=T_{fb}$ is not as good but still provides significant gain over single-cell beamforming with perfect CSI. In addition, training and feedback optimization improves the performance of analog and digital feedback.

\section{Conclusions}\label{Sec:Conclusions}
This paper investigated an intercell interference cancellation (ICIC) system while specifically accounting for and optimizing the necessary channel training and feedback. A main conclusion is that despite recent pessimism, ICIC with practical CSI assumptions can efficiently combat intercell interference with adaptive strategies: at low SNR adaptive transmission strategy switching is required, while at high SNR training and feedback optimization can greatly improve the performance. Considering that ICIC poses light requirements on backhaul capacity, it appears to be a practical solution for interference mitigation in multicell networks. Although our analysis was limited to two cells to retain tractability, the results in the paper provide guidelines and a starting point for designing multicell processing systems, with many users per cell, including possibly cells that are highly heterogeneous in terms of coverage areas and layout \cite{AndBac10,DhiGan11}.

There are further key issues that require more investigation. For example, the uplink feedback channel is assumed to be an unfaded AWGN channel for analog feedback and as an error-free channel without delay for digital feedback. More realistic feedback channels should be considered in the future, such as the effect of fading, feedback error and delay.  In some cases, where the delay is a significant fraction of (or even exceeds) the channel coherence time, additional feedback rate or training optimization may prove irrelevant.  An alternative approach in that case might be to consider the recently proposed ``retrospective'' approaches to exploiting delayed CSI feedback \cite{MadTse10,MalJaf10}, which are robust to feedback delay and have a possibly large regime of utility \cite{XuAnd11}.

\useRomanappendicesfalse
\appendix

\subsection{Proof of Proposition \ref{Prop:Training}}\label{App:PropTraining}
With CSI training, the signal power of user $i$ ($i=1,2$) can be approximated as
\begin{align}
P_dL_{i,i}|\bh_{i,i}^*\bff_{i,s_i}|^2
\stackrel{(a)}{=}P_dL_{i,i}\left|(\tilde{\bh}_{i,i}+\bn_{i,i})^*\bff_{i,s_i}\right|^2
\stackrel{(b)}{\approx} P_dL_{i,i}\left|\tilde{\bh}_{i,i}^*\bff_{i,s_i}\right|^2
\stackrel{(c)}\sim\left\{\begin{array}{ll}P_dL_{i,i}\kappa_{i,i}^2\chi_{2N_t}^2&s_i=\mathrm{BF}\\
P_dL_{i,i}\kappa_{i,i}^2\chi_{2(N_t-1)}^2&s_i=\mathrm{IC}\end{array},\right.\notag
\end{align}
where step (a) follows \eqref{eq:H-Hest}, in (b) we ignore the estimation error term which is normally very small as its variance is inverse to the pilot power, and step (c) follows the results for the perfect CSI case and $\tilde{\bh}_{i,i}\sim\mathcal{CN}(0,\kappa_{i,i}\bI_{N_t})$. The interference term for $s_{\bar{i}}=\mathrm{IC}$ is
\begin{align}
P_dL_{i,\bar{i}}\left|\bh_{i,\bar{i}}^*\bff_{\bar{i},s_{\bar{i}}}\right|^2
=P_dL_{i,\bar{i}}\left|(\tilde{\bh}_{i,\bar{i}}+\bn_{i,\bar{i}})^*\bff_{\bar{i},s_{\bar{I}}}\right|^2
\stackrel{(d)}{=}P_dL_{i,\bar{i}}|\bn_{i,\bar{i}}^*\bff_{\bar{i},s_{\bar{i}}}|^2
\sim P_dL_{i,\bar{i}}\sigma_{i,\bar{i}}^2\chi_2^2,\notag
\end{align}
where step (d) follows the design rule of the precoder. If $s_{\bar{i}}=\mathrm{BF}$, then $P_dL_{i,\bar{i}}\left|\bh_{i,\bar{i}}^*\bff_{\bar{i},s_{\bar{i}}}\right|^2\sim P_dL_{i,\bar{i}}\chi_2^2$. Therefore, both the signal power and interference power are chi-square random variables. Applying  \emph{Lemma \ref{lemma:Rate}}, the average achievable throughput of user $i$ with estimated CSI is approximated as in \eqref{eq:Rate_CHest}.

\subsection{Proof of Lemma \ref{lemma:rateloss}}\label{App:rateloss}
The average throughput of user $i$ is approximated by
$R_{i,\mathrm{T}}\approx\mathbb{E}\left[\log_2\left(1+\frac{P_dL_{i,i}\kappa_{i,i}^2\chi_{2(N_t-1)}^2}{1+P_dL_{i,\bar{i}}\sigma_{i,\bar{i}}^2\chi_2^2}\right)\right]$.
With $P_d,P_t\rightarrow\infty$ and $\frac{P_d}{P_t}=\nu$, we have $\kappa_{i,i}^2\rightarrow1$ and
$P_d\sigma_{i,\bar{i}}^2\rightarrow\frac{\nu}{\overline{T}_tL_{i,\bar{i}}}$, so
\begin{align}
R_{i,\mathrm{T}}\rightarrow\mathbb{E}\left[1+\frac{P_dL_{i,i}\chi_{2(N_t-1)}^2}{1+\frac{\nu}{\overline{T}_t}\chi_2^2}\right]
\approx\mathbb{E}\left[\log_2\left(\frac{P_dL_{i,i}\chi_{2(N_t-1)}^2}{1+\frac{\nu}{\overline{T}_t}\chi_2^2}\right)\right]
\approx R_{i}-\mathbb{E}\left[\log_2\left(1+\frac{\nu}{\overline{T}_t}\chi_2^2\right)\right],
\end{align}
which gives the desired result.

\subsection{Proof of Lemma \ref{lemma:SuffCond}}\label{App:SuffCond}
Denote $x\triangleq\sqrt{1-{\overline{T}_t}/{\overline{T}}}$, $a\triangleq\log(L_{ii}P^{dl}e^{\psi(N_t-1)})$, the optimization problem can be rewritten as
\begin{equation}
\max_{0<x\leq \sqrt{1-{N_B}/{\overline{T}}}}f(x)=ax^2-2x^2\log\left(x+1/\sqrt{\overline{T}}\right).
\end{equation}
To obtain the maximum at $\overline{T}_t=N_B$, or $x=\sqrt{1-{N_B}/{\overline{T}}}$, a sufficient condition is $\frac{\partial f}{\partial x}>0$ for $0<x\leq \sqrt{1-{N_B}/{\overline{T}}}$. First, we have
$\frac{\partial f}{\partial x}=2ax-\frac{2x^2}{x+1/\sqrt{\overline{T}}}-4x\log\left(x+1/\sqrt{\overline{T}}\right)$.
Set it to be greater than 0, we get
\begin{equation}\label{eq:abound}
a>\frac{x}{x+1/\sqrt{\overline{T}}}+2\log\left(x+1/\sqrt{\overline{T}}\right)\triangleq g(x).
\end{equation}
As the right-hand side is an increasing function of $x$, and $0<x\leq \sqrt{1-\frac{N_B}{\overline{T}}}$, a sufficient condition for \eqref{eq:abound} to hold is $a>g\left(\sqrt{1-{N_B}/{\overline{T}}}\right)$, which gives the result in \eqref{eq:SuffCond}.

\subsection{Proof of Proposition \ref{Prop:dFB}}\label{App:dFB}
With training and digital feedback, the precoders are designed based on the quantized CSI $\hat{\bh}_{i,j}$, $i,j=1,2$. The signal power for user $i$ is
\begin{align}
&P_dL_{i,i}|\bh_{i,i}^*\bff_{i,s_i}|^2
=P_dL_{i,i}|(\tilde{\bh}_{i,i}+\bee_{i,i})^*\bff_{i,s_i}|^2\notag\\
\stackrel{(a)}{\approx}& P_dL_{i,i}|\tilde{\bh}_{i,i}^*\bff_{i,s_i}|^2
=P_dL_{i,i}\|\tilde{\bh}_{i,i}\|\cdot\left|\frac{\tilde{\bh}_{i,i}^*}{\|\tilde{\bh}_{i,i}\|}\cdot\bff_{i,s_i}\right|^2\notag\\
\stackrel{(b)}{=}&P_dL_{i,i}\|\tilde{\bh}_{i,i}\|\cdot\left|(\cos\theta_{i,i}\hat{\bh}_{i,i}+\sin\theta_{i,i}\bg_{i,i})^*\cdot\bff_{i,s_i}\right|^2\notag\\
\stackrel{(c)}{\approx}&P_dL_{i,i}\xi_{i,i}\|\tilde{\bh}_{i,i}\|\cdot|\hat{\bh}_{i,i}^*\bff_{i,s_i}|^2
\stackrel{(d)}\sim\left\{\begin{array}{ll}P_dL_{i,i}\xi_{i,i}\kappa_{i,i}^2\chi_{2N_t}^2&s_i=\mathrm{BF}\\
P_dL_{i,i}\xi_{i,i}\kappa_{i,i}^2\chi_{2(N_t-1)}^2&s_i=\mathrm{IC}\end{array},\right.\notag
\end{align}
where step (a) ignores the estimation error vector which is small for large $P_t$, in step (b) the estimated channel direction is decomposed into the quantized direction $\hat{\bh}_{i,i}$ and its orthogonal direction $\bg_{i,i}$, step (c) ignores the quantization error term, as $\sin\theta_{i,i}=\mathcal{O}(\theta_{i,i})\ll1$ for reasonable $B_{i,i}$, and approximates $\cos\theta_{i,i}^2$ by its expectation, and step (d) follows the results for the perfect CSI case and the fact $\tilde{\bh}_{i,i}\sim\mathcal{CN}(0,\kappa_{i,i}\bI_{N_t})$.

If $s_{\bar{i}}=\mathrm{BF}$, the interference power is $P_dL_{i,\bar{i}}\left|\bh_{i,\bar{i}}^*\bff_{\bar{i},s_{\bar{i}}}\right|^2\sim P_dL_{i,\bar{i}}\chi_2^2$; otherwise, for $s_{\bar{i}}=\mathrm{IC}$,
\begin{align}
&P_dL_{i,\bar{i}}|\bh_{i,\bar{i}}^*\bff_{\bar{i},s_{\bar{i}}}|^2
=P_dL_{i,\bar{i}}\left|(\tilde{\bh}_{i,\bar{i}}+\bee_{i,\bar{i}})^*\bff_{\bar{i},s_{\bar{i}}}\right|^2\notag\\
\stackrel{(e)}{\approx}&P_dL_{i,\bar{i}}\left|\tilde{\bh}_{i,\bar{i}}^*\bff_{\bar{i},s_{\bar{i}}}\right|^2+P_dL_{i,\bar{i}}\left|\bee_{i,\bar{i}}^*\bff_{\bar{i},s_{\bar{i}}}\right|^2
\stackrel{(f)}{\sim}P_dL_{i,\bar{i}}\kappa_{i,\bar{i}}^2\cdot2^{-\frac{B_{i,\bar{i}}}{N_t-1}}\chi_2^2+P_dL_{i,\bar{i}}\sigma_{i,\bar{i}}^2\chi_2^2,\label{eq:ResidualI}
\end{align}
where step (e) ignores the term with both $\tilde{\bh}_{i,\bar{i}}$ and $\bee_{i,\bar{i}}$, and step (f) follows the result in \cite{ZhaRob09EURASIP} that the residual interference term due to quantization error, $\left|\tilde{\bh}_{i,\bar{i}}^*\bff_{\bar{i},s_{\bar{i}}}\right|^2$, is an exponential random variable. The two terms in \eqref{eq:ResidualI} correspond to the residual interference due to the quantization error and the estimation error, respectively.

Based on the above results and applying \emph{Lemma \ref{lemma:Rate}}, we get \eqref{eq:Rate_dFB}.

\bibliographystyle{IEEEtran}
\bibliography{bibi}

\begin{thebibliography}{10}
\providecommand{\url}[1]{#1}
\csname url@rmstyle\endcsname
\providecommand{\newblock}{\relax}
\providecommand{\bibinfo}[2]{#2}
\providecommand\BIBentrySTDinterwordspacing{\spaceskip=0pt\relax}
\providecommand\BIBentryALTinterwordstretchfactor{4}
\providecommand\BIBentryALTinterwordspacing{\spaceskip=\fontdimen2\font plus
\BIBentryALTinterwordstretchfactor\fontdimen3\font minus
  \fontdimen4\font\relax}
\providecommand\BIBforeignlanguage[2]{{%
\expandafter\ifx\csname l@#1\endcsname\relax
\typeout{** WARNING: IEEEtran.bst: No hyphenation pattern has been}%
\typeout{** loaded for the language `#1'. Using the pattern for}%
\typeout{** the default language instead.}%
\else
\language=\csname l@#1\endcsname
\fi
#2}}

\bibitem{GesHan10JSAC}
D.~Gesbert, S.~Hanly, H.~Huang, S.~Shamai, O.~Simeone, and W.~Yu, ``Multi-cell
  {MIMO} cooperative networks: A new look at interference,'' \emph{IEEE J.
  Select. Areas Commun.}, vol.~28, no.~9, pp. 1380--1408, Dec. 2010.

\bibitem{IrmDro11}
R.~Irmer, H.~Droste, P.~Marsch, M.~Grieger, G.~Fettweis, S.~Brueck, H.-P.
  Mayer, L.~Thiele, and V.~Jungnickel, ``Coordinated multipoint: Concepts,
  performance, and field trial results,'' \emph{IEEE Commun. Mag.}, vol.~49,
  pp. 102--111, Feb. 2011.

\bibitem{AnnBar10}
S.~Annapureddy, A.~Barbieri, S.~Geirhofer, S.~Mallik, and A.~Gorokhov,
  ``Coordinated joint transmission in {WWAN},'' in \emph{IEEE Communication
  Theory Workshop}, May 2010.

\bibitem{ZhaAnd10JSAC}
J.~Zhang and J.~G. Andrews, ``Adaptive spatial intercell interference
  cancellation in multicell wireless networks,'' \emph{IEEE J. Select. Areas
  Commun.}, vol.~28, no.~9, pp. 1455--1468, Dec. 2010.

\bibitem{ShamaiVTC01}
S.~{Shamai (Shitz)} and B.~M. Zaidel, ``Enhancing the cellular downlink
  capacity via co-processing at the transmitting end,'' in \emph{Proc. IEEE
  Veh. Technol. Conf.}, Rhodes, Greece, May 2001, pp. 1745--1749.

\bibitem{Zhang04}
H.~Zhang and H.~Dai, ``Cochannel interference mitigation and cooperative
  processing in downlink multicell multiuser {MIMO} networks,'' \emph{EURASIP
  Journal on Wireless Communications and Networking}, no.~2, pp. 222--235, 4th
  Quarter 2004.

\bibitem{Karakayali06a}
K.~Karakayali, G.~J. Foschini, and R.~A. Valenzuela, ``Network coordination for
  spectrally efficient communications in cellular systems,'' \emph{IEEE
  Wireless Communications Magazine}, vol.~13, no.~4, pp. 56--61, Aug. 2006.

\bibitem{ChoAnd08Twc}
W.~Choi and J.~G. Andrews, ``The capacity gain from intercell scheduling in
  multi-antenna systems,'' \emph{IEEE Trans. Wireless Commun.}, vol.~7, no.~2,
  pp. 714--725, Feb. 2008.

\bibitem{ZhaChe09Twc}
J.~Zhang, R.~Chen, J.~G. Andrews, A.~Ghosh, and R.~W. {Heath Jr.}, ``Networked
  {MIMO} with clustered linear precoding,'' \emph{IEEE Trans. Wireless
  Commun.}, vol.~8, no.~4, pp. 1910--1921, Apr. 2009.

\bibitem{MarFet07EW}
P.~Marsch and G.~Fettweis, ``A framework for optimizing the downlink
  performance of distributed antenna systems under a constrained backhaul,'' in
  \emph{Proc. European Wireless Conf. (EW' 07)}, Paris, France, Apr. 2007.

\bibitem{SanSom09IT}
A.~Sanderovich, O.~Somekh, H.~V. Poor, and S.~Shamai, ``Uplink macro diversity
  of limited backhaul cellular network,'' \emph{IEEE Trans. Inform. Theory},
  vol.~55, no.~8, pp. 3457--3478, Aug. 2009.

\bibitem{BjoZak09Glob}
E.~Bjornson, R.~Zakhour, D.~Gesbert, and B.~Ottersten, ``Distributed multicell
  and multiantenna precoding: characterization and performance evaluation,'' in
  \emph{Proc. IEEE Globecom}, Honolulu, Hawaii, Nov. 30 - Dec. 4 2009.

\bibitem{ZakGes10Twc}
R.~Zakhour and D.~Gesbert, ``Distributed multicell-{MISO} precoding using the
  layered virtual {SINR} framework,'' \emph{IEEE Trans. Wireless Commun.},
  vol.~9, no.~8, pp. 2444--2448, Aug. 2010.

\bibitem{RamCai09PIMRC}
S.~Ramprashad and G.~Caire, ``Cellular vs network {MIMO}: A comparison
  including channel state information overhead,'' in \emph{Proc. of the IEEE
  Int. Symp. on Personal Indoor and Mobile Radio Comm.}, Tokyo, Japan, Sept.
  2009.

\bibitem{RamCai09Asilomar}
S.~Ramprashad, G.~Caire, and H.~Papadopoulos, ``Cellular and network {MIMO}
  architectures: {MU-MIMO} spectral efficiency and costs of channel state
  information,'' in \emph{Proc. IEEE Asilomar Conference on Signals, Systems,
  and Computers}, Pacific Grove, CA, Nov. 2009.

\bibitem{HuhTul10CISS}
H.~Huh, A.~Tulinoy, and G.~Caire, ``Network {MIMO} large-system analysis and
  the impact of {CSIT} estimation,'' in \emph{44th Annual Conference on
  Information Sciences and Systems (CISS)}, Princeton, NJ, Mar. 2010, pp. 1--6.

\bibitem{BhaRao10ICASSP}
R.~Bhagavatula, B.~Rao, and R.~W. {Heath, Jr.}, ``Limited feedback with joint
  {CSI} quantization for multicell cooperative generalized eigenvector
  beamforming,'' in \emph{Proc. of the IEEE Int. Conf. on Acoustics, Speech,
  and Signal Proc.}, Dallas, TX, Mar. 2010, pp. 2838--2841.

\bibitem{BhaHea11Tsp}
R.~Bhagavatula and R.~W. {Heath, Jr.}, ``Adaptive limited feedback for sum-rate
  maximizing beamforming in cooperative multicell systems,'' \emph{IEEE Trans.
  Signal Processing}, vol.~59, no.~2, pp. 800--811, Feb. 2011.

\bibitem{LovHea08JSAC}
D.~J. Love, R.~W. {Heath Jr.}, V.~K.~N. Lau, D.~Gesbert, B.~D. Rao, and
  M.~Andrews, ``An overview of limited feedback in wireless communication
  systems,'' \emph{IEEE J. Select. Areas Commun.}, vol.~26, no.~8, pp.
  1341--1365, Oct. 2008.

\bibitem{Jin06IT}
N.~Jindal, ``{MIMO} broadcast channels with finite rate feedback,'' \emph{IEEE
  Trans. Inform. Theory}, vol.~52, no.~11, pp. 5045--5059, Nov. 2006.

\bibitem{ShaHas05IT}
M.~Sharif and B.~Hassibi, ``On the capacity of {MIMO} broadcast channels with
  partial side information,'' \emph{IEEE Trans. Inform. Theory}, vol.~51,
  no.~2, pp. 506--522, Feb. 2005.

\bibitem{YooJin07JSAC}
T.~Yoo, N.~Jindal, and A.~Goldsmith, ``Multi-antenna downlink channels with
  limited feedback and user selection,'' \emph{IEEE J. Select. Areas Commun.},
  vol.~25, no.~7, pp. 1478--1491, Sept. 2007.

\bibitem{ZhaRob09EURASIP}
J.~Zhang, R.~{W. Heath Jr.}, M.~Kountouris, and J.~G. Andrews, ``Mode switching
  for the multi-antenna broadcast channel based on delay and channel
  quantization,'' \emph{EURASIP Journal on Advances in Signal Processing},
  2009, article ID 802548, 15 pages.

\bibitem{ZhaKou11Tcomm}
J.~Zhang, M.~Kountouris, J.~G. Andrews, and R.~W. {Heath Jr.}, ``Multi-mode
  transmission for the {MIMO} broadcast channel with imperfect channel state
  information,'' \emph{IEEE Trans. Commun.}, vol.~59, no.~3, pp. 803--814, Mar.
  2011.

\bibitem{CaiJin10IT}
G.~Caire, N.~Jindal, M.~Kobayashi, and N.~Ravindran, ``Multiuser {MIMO}
  achievable rates with downlink training and channel state feedback,''
  \emph{IEEE Trans. Inform. Theory}, vol.~56, no.~6, pp. 2845--2866, Jun. 2010.

\bibitem{KobJin09}
M.~Kobayashi, N.~Jindal, and G.~Caire, ``Training and feedback optimization for
  multiuser {MIMO} downlink,'' \emph{IEEE Trans. Commun.}, 2011, to appear.

\bibitem{SanHon10IT}
W.~Santipach and M.~L. Honig, ``Optimization of training and feedback overhead
  for beamforming over block fading channels,'' \emph{IEEE Trans. Inform.
  Theory}, vol.~56, no.~12, pp. 6103--6115, Dec. 2010.

\bibitem{JinAnd10Tcomm}
N.~Jindal, J.~Andrews, and S.~Weber, ``Multi-antenna communication in ad hoc
  networks: Achieving {MIMO} gains with {SIMO} transmission,'' \emph{IEEE
  Trans. Commun.}, vol.~59, no.~2, pp. 529--540, Feb. 2011.

\bibitem{Kay93I}
S.~M. Kay, \emph{Fundamentals of statistical signal processing I: Estimation
  theory}.\hskip 1em plus 0.5em minus 0.4em\relax Prentice Hall, 1993.

\bibitem{KobCai08ISIT}
M.~Kobayashi, G.~Caire, and N.~Jindal, ``How much training and feedback are
  needed in {MIMO} broadcast channels?'' in \emph{Proc. IEEE Int. Symp.
  Information Theory}, Toronto, Canada, Jul. 2008, pp. 2663--2667.

\bibitem{HasHoc03IT}
B.~Hassibi and B.~Hochwald, ``How much training is needed in multiple-antenna
  wireless links?'' \emph{IEEE Trans. Inform. Theory}, vol.~49, no.~4, pp.
  951--963, Apr. 2003.

\bibitem{MarHoc06Tsp}
T.~L. Marzetta and B.~M. Hochwald, ``Fast transfer of channel state information
  in wireless systems,'' \emph{IEEE Trans. Signal Processing}, vol.~54, no.~4,
  pp. 1268--1278, Apr. 2006.

\bibitem{SamMan06Tcomm}
D.~Samardzija and N.~Mandayam, ``Unquantized and uncoded channel state
  information feedback in multiple-antenna multiuser systems,'' \emph{IEEE
  Trans. Commun.}, vol.~54, no.~7, pp. 1335--1345, Jul. 2006.

\bibitem{YueLov07Twc}
C.~K. Au-Yeung and D.~J. Love, ``On the performance of random vector
  quantization limited feedback beamforming in a {MISO} system,'' \emph{IEEE
  Trans. Wireless Commun.}, vol.~6, no.~2, pp. 458--462, Feb. 2007.

\bibitem{AndBac10}
J.~G. Andrews, F.~Baccelli, and R.~K. Ganti, ``A tractable approach to coverage
  and rate in cellular networks,'' submitted to \emph{IEEE Trans. on
  Communications}, Sept. 2010. Available at http://arxiv.org/abs/1009.0516.

\bibitem{DhiGan11}
H.~Dhillon, R.~K. Ganti, F.~Baccelli, and J.~G. Andrews, ``Modeling and
  analysis of {K}-tier downlink heterogeneous cellular networks,'' submitted to
  \emph{IEEE J. Select. Areas Commun.}, March 2011. Available at
  http://arxiv.org/abs/1103.2177.

\bibitem{MadTse10}
M.~A. Maddah-Ali and D.~Tse, ``Completely stale transmitter channel state
  information is still very useful,'' in \emph{Allerton Conference on
  Communication, Control and Computing}, Monticello, IL, Sept. 2010.

\bibitem{MalJaf10}
H.~Maleki, S.~A. Jafar, and S.~Shamai, ``Retrospective interference
  alignment,'' Sept. 2010, available at http://arxiv.org/abs/1009.3593.

\bibitem{XuAnd11}
J.~Xu, J.~G. Andrews, and S.~A. Jafar, ``Broadcast channels with delayed
  finite-rate feedback: Predict or observe?'' submitted to \emph{IEEE Trans. on
  Wireless Communications}, May 2011. Available at
  http://arxiv.org/abs/1105.3686.

\end{thebibliography}

\begin{figure}
\centering
\includegraphics[clip=true,scale=.8]{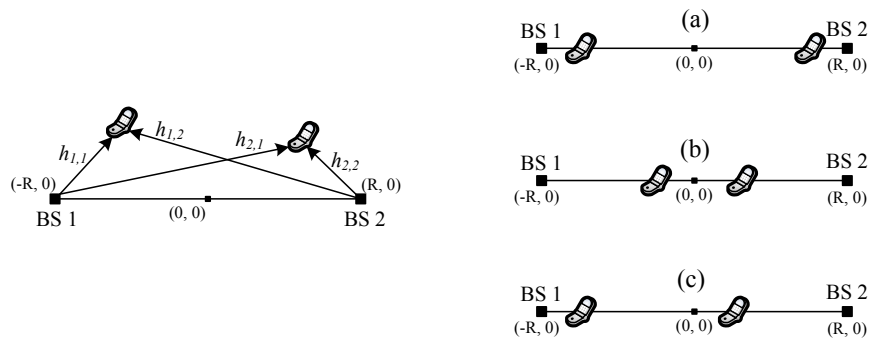}
\caption{A two-cell network. Each BS is serving a home user, which is suffering interference from the neighboring BS. We mainly consider the scenario where both users are on the line connecting BS 1 and BS 2, with three typical cases shown in the right figure.}\label{fig:2cell}
\end{figure}


\begin{figure*}
\centering {\subfigure[Perfect CSI]{\includegraphics[width=3in]{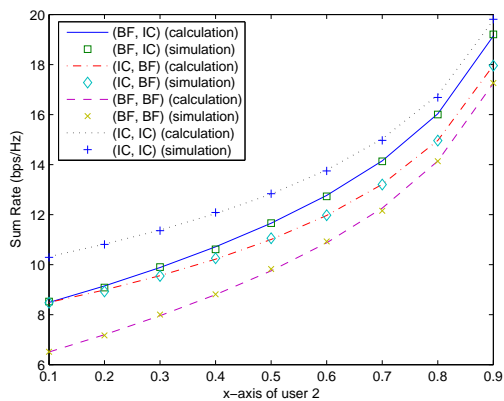} \label{fig:simvscal_10dB}}
\hfil \subfigure[CSI Training]{\includegraphics[width=3in]{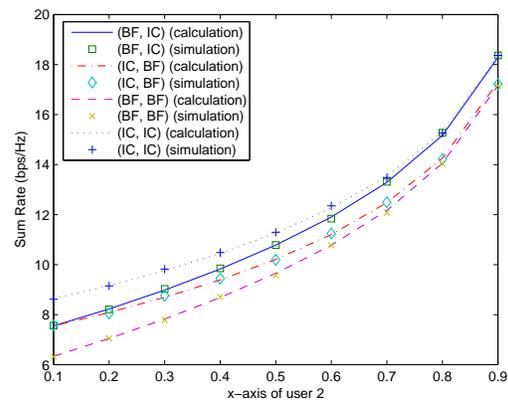}
\label{fig:Training10dB}}} \caption{Simulation and calculation results for the 2-cell network with perfect CSI and CSI training. $\alpha=3$, $N_t=4$. User 1 is at $(-.1R,0)$, edge SNR $10$ dB, $\alpha=3$,
$N_t=4$.} \label{fig:User1Edge}
\end{figure*}

\begin{figure*}
\centering {\subfigure[Perfect CSI]{\includegraphics[width=3in]{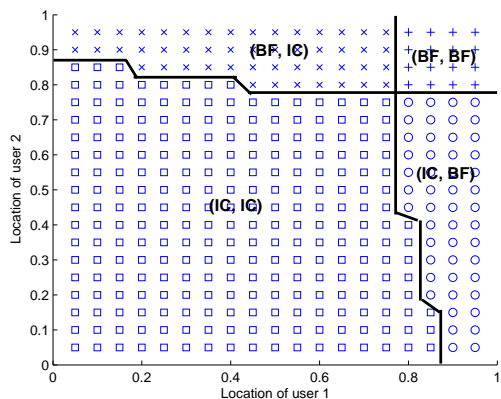} \label{fig:ModePlot5dBCSIT}}
\hfil \subfigure[CSI Training]{\includegraphics[width=3in]{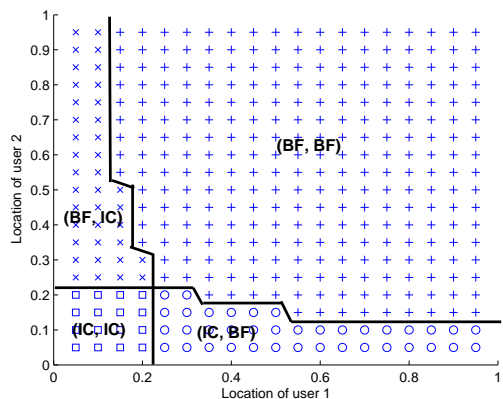}
\label{fig:ModePlot5dB}}} \caption{Operating regions for different transmission strategy pairs. $\alpha=3$, $N_t=4$, edge SNR 4 dB, user 1 and
user 2 are on the line connecting BS 1 and BS 2, `x':
$(s_1,s_2)=(\mathrm{BF},\mathrm{IC})$; 'o': $(s_1,s_2)=(\mathrm{IC},\mathrm{BF})$; '+':
$(s_1,s_2)=(\mathrm{BF},\mathrm{BF})$; '$\Box$': $(s_1,s_2)=(\mathrm{IC},\mathrm{IC})$.} \label{fig:ModePlot}
\end{figure*}


\begin{figure}
\centering
\includegraphics[width=4.5in]{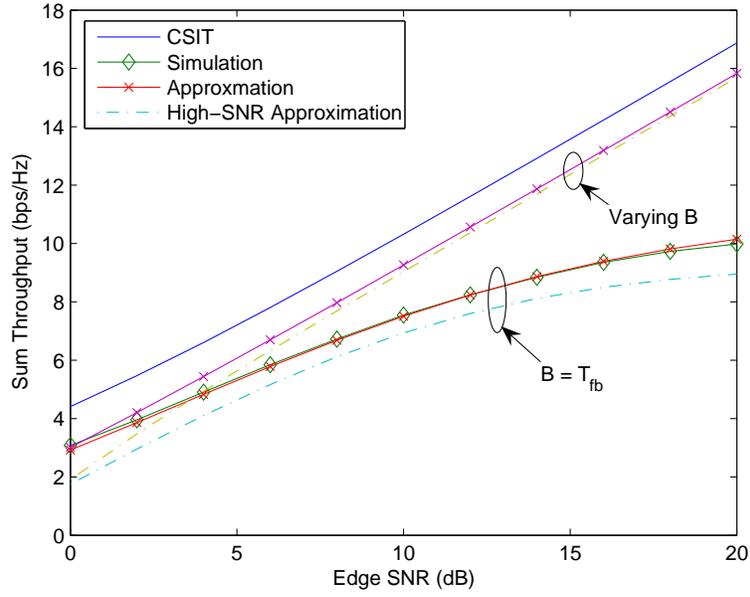}
\caption{Simulation and approximation for ICIC with training and digital feedback, $T=500$. User 1 is at $(-0.1R,0)$, and user 2 is at $(0.1R,0)$.}\label{fig:SimvsCal}
\end{figure}

\begin{figure}
\centering
\includegraphics[width=4.5in]{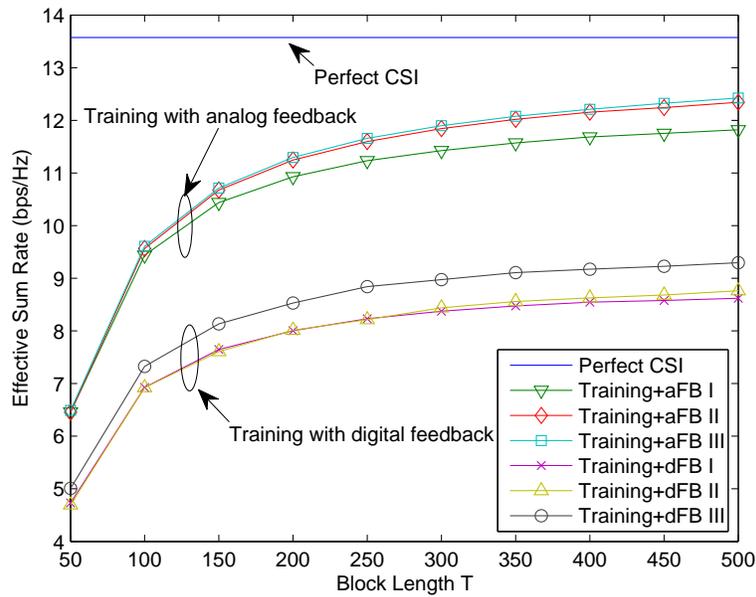}
\caption{Achievable sum rates for different systems, with edge SNR 15 dB, and $B=T_{fb}$ for digital feedback. User 1 is at $(-0.1R,0)$ and user 2 is at $(0.1R,0)$.}\label{fig:OPTfig}
\end{figure}

\begin{figure*}
\centering {\subfigure[Average Throughput]{\includegraphics[width=4.5in]{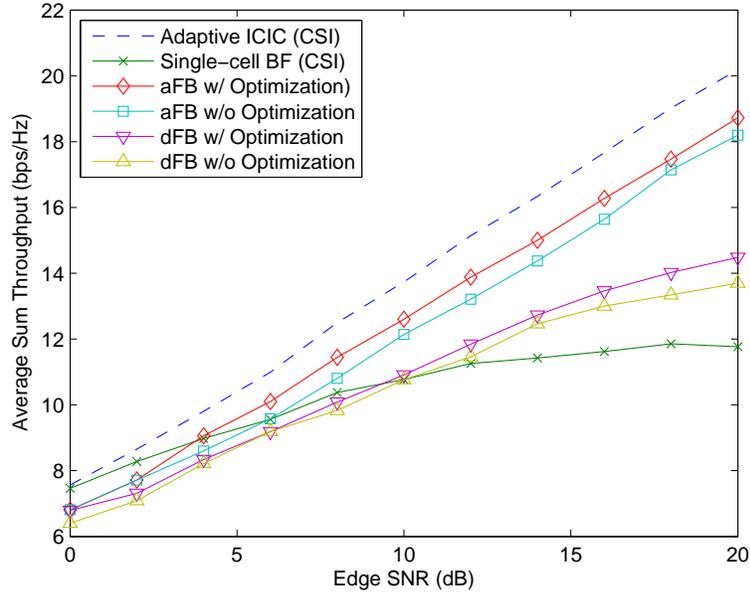} \label{fig:Ravg_PA}}
\hfil \subfigure[Edge Throughput]{\includegraphics[width=4.5in]{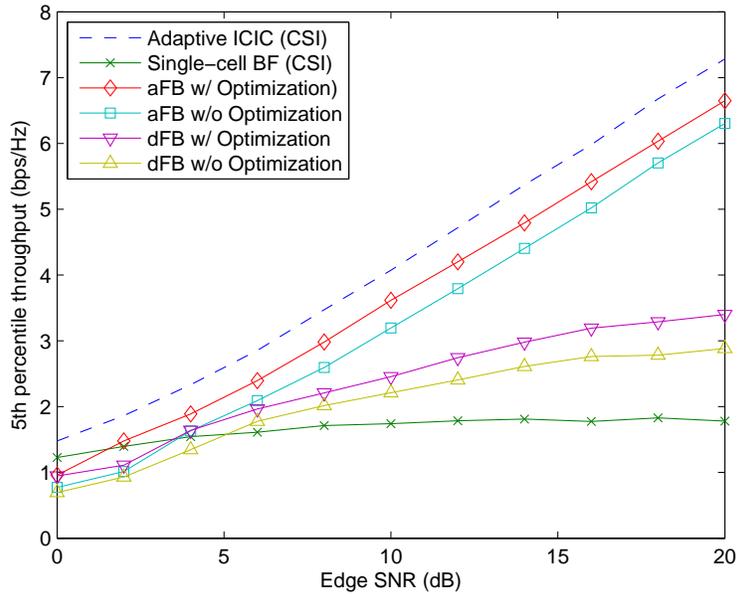}
\label{fig:R5_PA}}} \caption{Performance comparison of different systems, with $T=500$, and $B=T_{fb}$ for  digital feedback. ``aFB'' denotes the system with training and analog feedback, while ``dFB'' denotes the system with training and digital feedback. Training and feedback overheads are considered.} \label{fig:R_PA}
\end{figure*}

\end{document}